\documentclass[11pt]{article}

\usepackage[margin=1in]{geometry}
\usepackage{amsmath,amssymb,amsfonts,amsbsy,amsthm}
\usepackage{bm}
\usepackage{xcolor}
\usepackage{graphicx}
\usepackage{booktabs}
\usepackage{subcaption}
\graphicspath{{figures/}} 

\newcommand{\safeincludegraphics}[2][]{%
  \IfFileExists{#2}{\includegraphics[#1]{#2}}{%
    \IfFileExists{figures/#2}{\includegraphics[#1]{figures/#2}}{\fbox{Missing figure: \detokenize{#2}}}%
  }%
}
\usepackage{authblk}
\usepackage[numbers,sort&compress]{natbib}
\usepackage{hyperref}
\usepackage{mathtools}
\usepackage{bbm}
\usepackage{enumitem}

\theoremstyle{plain}
\newtheorem{lemma}{Lemma}[section]
\newtheorem{proposition}[lemma]{Proposition}
\newtheorem{corollary}[lemma]{Corollary}

\theoremstyle{definition}
\newtheorem{definition}[lemma]{Definition}

\theoremstyle{remark}
\newtheorem{remark}[lemma]{Remark}

\title{Disentangled Deep Priors for Bayesian Inverse Problems}
\author[1]{Arkaprabha Ganguli}
\author[1]{Emil Constantinescu}
\affil[1]{Mathematics \& Computer Science Division, Argonne National Laboratory, IL, USA\\
\text{\{aganguli, emconsta\}@anl.gov}}

\date{\today}

\begin{document}
\maketitle

\begin{abstract}
We propose a structured prior for high-dimensional Bayesian inverse problems based on a disentangled deep generative model whose latent space is partitioned into auxiliary variables aligned with known and interpretable physical parameters and residual variables capturing remaining unknown variability. This yields a hierarchical prior in which interpretable coordinates carry domain-relevant uncertainty while the residual coordinates retain the flexibility of deep generative models. By linearizing the generator, we characterize the induced prior covariance and derive conditions under which the posterior exhibits approximate block-diagonal structure in the latent variables---clarifying when representation-level disentanglement translates into a separation of uncertainty in the inverse problem. We formulate the resulting latent-space inverse problem and solve it using MAP estimation and Markov chain Monte Carlo (MCMC) sampling. On elliptic PDE inverse problems---conductivity identification and source identification — the approach matches an oracle Gaussian process prior under correct specification and provides substantial improvement under prior misspecification, while recovering interpretable physical parameters and producing spatially calibrated uncertainty estimates.
\end{abstract}

\section{Introduction}
\label{sec:introduction}

Bayesian inverse problems seek to infer unknown high-dimensional quantities $x$ from indirect and noisy measurements $y$ through a forward model
\begin{equation}
  y \;=\; \mathcal{F}(x) + \varepsilon,\qquad \varepsilon \sim p_{\varepsilon}.
  \label{eq:forward}
\end{equation}
A central difficulty is the construction of a prior $p(x)$ that is expressive enough to represent complex and potentially multimodal structure, while still supporting interpretable posterior summaries.

Deep generative models (DGMs), including variational autoencoders (VAEs), normalizing flows, and diffusion or flow-matching models, define implicit priors through learned maps $G_\theta$ from low-dimensional latent variables to high-dimensional fields. These models can represent complex data distributions, but their latent coordinates are often entangled: a single coordinate may influence several unrelated physical attributes. In inverse problems, this complicates posterior interpretation and limits the usefulness of latent-space summaries.

Recent work on auxiliary-variable-guided disentanglement introduces partial physical information during training. Aux-VAE partitions the latent space into a guided block aligned with known generative factors and a residual block that captures remaining variation, and it uses correlation-based regularizers to reduce leakage between these blocks~\cite{ganguli2025enhancing}. DL-CFM extends the same idea to latent conditional flow-matching models~\cite{ganguli2025dlcfm}.

The purpose of this study is to formulate these auxiliary-guided models as priors for Bayesian inverse problems and to analyze what part of the intended latent structure survives in the induced prior and posterior. To this end, we treat the guided variables $u$ as random variables with their own prior, retain a residual latent block $z_{\mathrm{rec}}$, and use the learned generator to induce a prior on $x$. This yields a hierarchical latent prior with user-specified coordinates that carry named semantics and with a residual block that represents variability not captured by $u$.

This leads to the notion of disentangled deep priors: implicit priors $p(x)$ realized by a deep generator $G_\theta$ and a structured latent prior $p(z_{\mathrm{aux}},z_{\mathrm{rec}},u)$ that encodes a factorized structure and alignment properties motivated by domain knowledge. The aim is to expose interpretable coordinates and to quantify residual coupling, rather than to assume exact posterior factorization.

\paragraph{Contributions} We make the following contributions:
\begin{enumerate}[leftmargin=*]
  \item We formalize a hierarchical Bayesian inverse problem in which the prior $p(x)$ is induced by a disentangled latent variable model trained with auxiliary-variable guidance. We show how the Aux-VAE and DL-CFM objectives can be reinterpreted as shaping the encoder-side conditional distribution toward a structured conditional prior  (via a Jensen/KL bound), and we clarify the additional generator-side conditions or diagnostics needed to obtain approximate decoupling properties for the induced (linearized) prior and posterior.
  \item We describe latent-space inference methods (variational and MCMC) that operate on interpretable latent coordinates $(u,z_{\mathrm{rec}})$ rather than directly on $x$, and we discuss how the resulting posterior can be summarized in terms of physical factors and residual structure together with diagnostics for residual coupling.
  \item We present numerical experiments on elliptic PDE inverse problems for spatially varying coefficients. In each case we describe how to define auxiliary variables $u$, how to train the disentangled generative model, and how to formulate the corresponding Bayesian inverse problem. The examples are used to assess reconstruction quality, posterior summaries, and calibration of the named coarse variables.
\end{enumerate}

\bigskip 

The remainder of this paper is organized as follows. Section~\ref{sec:background} reviews background material on Bayesian inverse problems and auxiliary-variable-guided disentanglement, and introduces the statistical interpretation used later in the paper. Section~\ref{sec:related} places the present formulation in the context of deep generative priors and disentangled latent-variable models. Section~\ref{sec:method} presents the hierarchical prior construction, the local analysis, and the latent-space inference formulations. Section~\ref{sec:experiments_2d} reports numerical experiments for elliptic PDE inverse problems. Section~\ref{sec:discussion} concludes with a discussion of limitations and possible extensions.

\section{Background}
\label{sec:background}

\subsection{Bayesian Inverse Problems}

Let $x \in \mathbb{R}^p$ denote an unknown high-dimensional parameter or field, and let $y \in \mathbb{R}^m$ denote data. Given a forward model $\mathcal{F} : \mathbb{R}^p \to \mathbb{R}^m$ and a noise model $p_\varepsilon$, the likelihood is $p(y \mid x) \propto p_\varepsilon\bigl(y - \mathcal{F}(x)\bigr)$. A prior $p(x)$ summarizes knowledge about $x$ before observing $y$. Bayes' rule then yields the posterior, $p(x \mid y) \propto p(y \mid x)\,p(x)$.

In many problems $x$ has a natural structure (fields, microstructures, profiles) and exhibits complex, non-Gaussian behavior. Classic Gaussian priors, e.g., based on Mat\'ern kernels or Gaussian Markov random fields, can encode smoothness or correlation length but are often inadequate to represent non-Gaussian features such as sharp interfaces, multimodality, or physically constrained patterns. Deep generative priors address this by representing $x$ as the output of a generator $G_\theta$ driven by latent variables $z$:
\begin{equation}
  z \sim p(z), \qquad x = G_\theta(z).
  \label{eq:deep-prior}
\end{equation}
The prior $p(x)$ is the push-forward of $p(z)$ through $G_\theta$, denoted $p(x) = (G_\theta)_\# p(z)$. DGMs such as VAEs and flows can approximate complex data distributions and have been used as priors in inverse problems, including PDE-based inverse problems and tomographic reconstruction. The latent variables $z$ are usually entangled, though, and lack physical interpretation; posteriors over $z$ are therefore difficult to analyze.

\subsection{Auxiliary-Variable-Guided Disentanglement}

We now summarize the Aux-VAE and DL-CFM frameworks that serve as building blocks for the disentangled priors.

\paragraph{Aux-VAE.} Let $x \in \mathbb{R}^p$ be data and $u \in \mathbb{R}^d$ be auxiliary variables that capture user-specified generative factors or descriptors (e.g., simulation inputs, or summary statistics extracted from $x$). A standard VAE introduces a latent $z \in \mathbb{R}^{d_Z}$ with an encoder $q_\phi(z \mid x)$, a decoder $p_\theta(x \mid z)$, and a prior $p(z)$. Aux-VAE partitions the latent:
\begin{equation*}
  z = (z_{\mathrm{aux}}, z_{\mathrm{rec}}),
  \quad
  z_{\mathrm{aux}} \in \mathbb{R}^d,
  \quad
  z_{\mathrm{rec}} \in \mathbb{R}^{d_Z - d},
\end{equation*}
and defines a conditional prior given auxiliary information $u$ of the form
\begin{equation}
  p(z \mid u)
  =
  \mathcal{N}\bigl(\mu_0(u),\Sigma_0\bigr),
  \qquad
  \mu_0(u) = \begin{bmatrix} u \\ 0 \end{bmatrix},\quad
  \Sigma_0 = \mathrm{diag}(\tau^2 I_d, I_{d_Z-d}),
  \label{eq:aux-prior}
\end{equation}
where $\tau^2 \ll 1$ is a small variance that softly anchors the guided coordinates to $u$ (after normalizing $u$).

The VAE objective for a single sample $(x,u)$ can be written as
\begin{equation}
  \mathcal{L}_{\mathrm{VAE}}(x,u;\theta,\phi)
  =
  \mathbb{E}_{q_\phi(z \mid x)}\bigl[\log p_\theta(x \mid z)\bigr]
  -
  \beta\,\mathrm{KL}\bigl(q_\phi(z \mid x)\,\|\,p(z \mid u)\bigr),
  \label{eq:vae-elbo}
\end{equation}
with a weight $\beta>0$.

Aux-VAE augments this objective with regularizers computed from the encoder mean $\mu_\phi(x) = \mathbb{E}[z\mid x]$. Writing $\mu_{\phi,\mathrm{aux}}$ and $\mu_{\phi,\mathrm{rec}}$ for the restrictions of $\mu_\phi$ to the auxiliary-guided and residual coordinates respectively, the additional terms enforce three properties: explicitness (each guided coordinate $\mu_{\phi,\mathrm{aux},j}$ tracks $u_j$ in a one-to-one fashion), intra-independence (different guided latents are decorrelated), and inter-independence (residual latents are decorrelated from $u$).

To achieve this, Aux-VAE introduces correlation-based penalties. Throughout, we view $(v,w)$ as random under the empirical training distribution (and the encoder mapping), and in practice we estimate all quantities below from minibatches. 

For random vectors $v \in \mathbb{R}^{m_v}$ and $w\in\mathbb{R}^{m_w}$ with finite second moments, let  $\Sigma_{vw} := \mathrm{Cov}(v,w)$, and $D_v := \mathrm{diag}(\Sigma_{vv})$, and thus $\mathrm{Corr}(v,w) := D_v^{-1/2}\,\Sigma_{vw}\,D_w^{-1/2}$.

\paragraph{Coordinatewise polynomial lift.}
For an integer $K\ge 1$ and a vector $v=(v_1,\dots,v_{m_v})$, we define the entrywise power 
$v^{(k)} := (v_1^k,\dots,v_{m_v}^k)$ for $k=1,\dots,K$ and the stacked feature map 
\begin{align}
\bigl(v^{(1)},v^{(2)},\dots,v^{(K)}\bigr) \in \mathbb{R}^{K m_v}\,.
\label{eq:poly_lift}
\end{align}
This lift contains coordinatewise monomials only.

Using these lifted features, we define two nonnegative dependence measures. The first is a decorrelation penalty that  averages absolute correlations between all pairs of lifted coordinates
\begin{equation}
  R_0^{(K)}(v,w)
  :=
  \frac{1}{K^2 m_v m_w}
  \sum_{k=1}^{K}\sum_{k'=1}^{K}
  \sum_{i=1}^{m_v}\sum_{j=1}^{m_w}
  \left|
    \bigl(\mathrm{Corr}(v^{(k)},w^{(k')})\bigr)_{ij}
  \right|\,.
  \label{eq:R0_def}
\end{equation}
Since $\bigl|\mathrm{Cov}(A,B)\bigr| =  \bigl|\mathrm{Corr}(A,B)\bigr|\,\sigma_A\sigma_B$, vanishing correlation implies vanishing covariance whenever the variances are nonzero; and thus, controlling correlations controls covariances provided the lifted standard deviations are bounded.

\paragraph{Alignment penalty and matching.}
When $m_v=m_w=m$ and the intended coordinate pairing is $i$-to-$i$, we define
\begin{equation}
  R_1^{(K)}(v,w)
  :=
  \frac{1}{K m}
  \sum_{k=1}^{K}
  \sum_{i=1}^{m}
  \left(
    1
    -
    \left|
      \bigl(\mathrm{Corr}(v^{(k)},w^{(k)})\bigr)_{ii}
    \right|
  \right).
  \label{eq:R1_def}
\end{equation}
In this paper $R_1^{(K)}$ is used either in the scalar case ($m=1$) or with identity matching.

Intuitively, $R_0^{(K)}$ penalizes cross-dependence between two blocks (via correlations of lifted features), while $R_1^{(K)}$ encourages one-to-one alignment between matched coordinates.
Aux-VAE then uses the following nonnegative penalties,
\begin{equation}
  \mathrm{Align}(A,B) = R_1^{(K)}(A,B),\quad
  \mathrm{Decorr}(A,B) = R_0^{(K)}(A,B),
\end{equation}
and constructs the total loss
\begin{align}
  \mathcal{L}_{\mathrm{AuxVAE}}
  &=
  -\,\mathcal{L}_{\mathrm{VAE}}
  + \lambda_1 \sum_{j=1}^d
  \Bigl[
    \mathrm{Align}(u_j,\mu_{\phi,\mathrm{aux},j})
    + \mathrm{Decorr}(u_j,\mu_{\phi,\mathrm{aux},-j})
  \Bigr]
  \nonumber\\
  &\quad
  + \lambda_2\,\mathrm{Decorr}(u,\mu_{\phi,\mathrm{rec}}),
  \label{eq:auxvae-loss}
\end{align}
with hyperparameters $\lambda_1,\lambda_2 \ge 0$.

We use a minor modification of the original Aux-VAE penalties~\cite{ganguli2025enhancing}: we express the objectives directly in terms of the correlation matrix of the lifted features \eqref{eq:poly_lift}, and (for the alignment term) we compare same-degree lifts so that perfect linear/sign alignment yields zero penalty. These choices give a clean interpretation in terms of coordinatewise polynomial correlations.

In the next proposition we recall a standard fact that under mild exponential-integrability conditions, independence is equivalent to the factorization of all mixed moments.
\begin{proposition}[Moment factorization and independence under exponential integrability]
\label{prop:moment_independence}
Let $X$ and $Y$ be real-valued random variables. Assume that the joint moment generating function
\[
  M_{X,Y}(s,t):=\mathbb{E}\bigl[e^{sX+tY}\bigr]
\]
is finite on a neighborhood of $(0,0)$; i.e., $\exists \delta>0$ such that $M_{X,Y}(s,t)<\infty$ $\forall$ $|s|<\delta$ and $|t|<\delta$.
Then the following are equivalent:
\begin{enumerate}
    \item $X$ and $Y$ are independent.
    \item All mixed moments factor: $\mathbb{E}[X^k Y^{k'}] = \mathbb{E}[X^k]\,\mathbb{E}[Y^{k'}]$ for all integers $k,k'\ge 0$.
    \item All centered mixed moments vanish: $\mathrm{Cov}(X^k, Y^{k'}) = 0$ for all integers $k,k'\ge 0$.
\end{enumerate}
Moreover, if (2) or (3) hold for all pairs $(k,k')$ with $k+k'\le K$, then $p_{X,Y}$ and $p_X p_Y$ agree on all mixed moments of total degree $\le K$.
\end{proposition}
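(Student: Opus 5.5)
The plan is to prove the cycle (1)$\Rightarrow$(2)$\Leftrightarrow$(3)$\Rightarrow$(1) and then read off the truncated statement from the same algebra.

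\textbf{Moments exist.} Finiteness of $M_{X,Y}$ on $(-\delta,\delta)^2$ gives finite moments of all orders. Indeed, for $0<s<\delta$ we have $e^{s|X|}\le e^{sX}+e^{-sX}$, so $\mathbb{E}[e^{s|X|}]\le M_{X,Y}(s,0)+M_{X,Y}(-s,0)<\infty$, and similarly for $Y$. Then $|X|^kk\le k!\,s^{-k}e^{s|X|}$ gives $\mathbb{E}|X|^k<\infty$. Together with $|X^kY^{k'}|\le \tfrac12(X^{2k}+Y^{2k'})$, all mixed moments are finite, so (2) and (3) make sense.

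\textbf{The easy implications.} (1)$\Rightarrow$(2) holds because independent integrable functions $X^k$ and $Y^{k'}$ have $\mathbb{E}[X^kY^{k'}]=\mathbb{E}[X^k]\mathbb{E}[Y^{k'}]$. The equivalence (2)$\Leftrightarrow$(3) is the identity $\mathrm{Cov}(X^k,Y^{k'})=\mathbb{E}[X^kY^{k'}]-\mathbb{E}[X^k]\mathbb{E}[Y^{k'}]$, which holds degree by degree. Since the identity is applied separately to each pair $(k,k')$, it also yields the "moreover" clause: if (2) or (3) holds for all $k+k'\le K$, then $\mathbb{E}_{p_{X,Y}}[X^kY^{k'}]=\mathbb{E}_{p_Xp_Y}[X^kY^{k'}]$ for every such pair. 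The right-hand side factors because the product measure has product moments.

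\textbf{The main step, (2)$\Rightarrow$(1).} I would compare $M_{X,Y}$ with $M_X(s)M_Y(t)$, the MGF of $p_Xp_Y$, on a smaller box.

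First I show that $M_{X,Y}$ equals its double power series near the origin. Fix $0<r<\delta/2$ and $|s|,|t|<r$. Since $\delta<2r$ is false by choice, we have $2r<\delta$, and
\[
\sum_{k,k'}\frac{|s|^k|t|^{k'}}{k!\,k'!}\mathbb{E}|X^kY^{k'}| = \mathbb{E}\bigl[e^{|sX|+|tY|}\bigr]\le \mathbb{E}\bigl[e^{r|X|+r|Y|}\bigr].
\]
The last expectation is bounded by a sum of four values $M_{X,Y}(\pm r,\pm r)$, all finite. Fubini/Tonelli therefore justifies $M_{X,Y}(s,t)=\sum_{k,k'}\frac{s^kt^{k'}}{k!k'!}\mathbb{E}[X^kY^{k'}]$.

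The same argument applied to $M_X(s)=M_{X,Y}(s,0)$ and $M_Y(t)$ gives absolutely convergent expansions. Their product is $\sum_{k,k'}\frac{s^kt^{k'}}{k!k'!}\mathbb{E}[X^k]\mathbb{E}[Y^{k'}]$. Under (2) the two series coincide, so $M_{X,Y}=M_XM_Y$ on $(-r,r)^2$.

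\textbf{The expected obstacle.} The remaining step is uniqueness: two probability measures on $\mathbb{R}^2$ whose MGFs are finite and agree on a neighborhood of the origin must be equal. This is standard, so I would either cite it or sketch it. The sketch uses the fact that $F(s,t)=\mathbb{E}[e^{sX+tY}]$ extends holomorphically to the tube $\{|\mathrm{Re}\,s|,|\mathrm{Re}\,t|<r\}$, using the same domination. Two holomorphic functions agreeing on a real open box agree on the whole connected tube by the identity theorem, applied one variable at a time. In particular the characteristic functions coincide on $\mathbb{R}^2$. Lévy's uniqueness theorem then gives $p_{X,Y}=p_Xp_Y$, which is exactly (1).
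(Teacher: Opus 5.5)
Your proposal is correct and follows essentially the same route as the paper: dominate $\mathbb{E}[e^{|sX|+|tY|}]$ by the four values $M_{X,Y}(\pm r,\pm r)$, use Tonelli/Fubini to expand $M_{X,Y}$ as a double power series, conclude $M_{X,Y}=M_XM_Y$ on a box around the origin, and invoke uniqueness of moment generating functions. The differences are minor. You explicitly verify that all mixed moments are finite, and you sketch the uniqueness step (analytic continuation to characteristic functions, then L\'evy's theorem) where the paper simply cites it. Your restriction $r<\delta/2$ is unnecessary; any $r<\delta$ works, as with the paper's $\delta_0$.
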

\begin{proof}
The implications $(1)\Rightarrow(2)$ and $(2)\Rightarrow(3)$ are immediate. If $X$ and $Y$ are independent, then
\[
\mathbb{E}[f(X)g(Y)] = \mathbb{E}[f(X)]\,\mathbb{E}[g(Y)]
\]
for all measurable $f$ and $g$ for which the expectations exist. Taking $f(x)=x^k$ and $g(y)=y^{k'}$ gives $(2)$, and $(2)$ implies $(3)$ from
\[
\mathrm{Cov}(X^k,Y^{k'})=\mathbb{E}[X^kY^{k'}]-\mathbb{E}[X^k]\mathbb{E}[Y^{k'}].
\]

For $(3)\Rightarrow(1)$, fix $s,t\in\mathbb{R}$ with $|s|<\delta$ and $|t|<\delta$, and choose $\delta_0$ so that $\max\{|s|,|t|\}<\delta_0<\delta$. Since
\[
|sX+tY|\le \delta_0(|X|+|Y|),
\]
it is enough to show that $\mathbb{E}[e^{\delta_0(|X|+|Y|)}]<\infty$. For any real numbers $a,b$,
\[
e^{|a|+|b|}\le e^{a+b}+e^{a-b}+e^{-a+b}+e^{-a-b},
\]
so with $a=\delta_0 X$ and $b=\delta_0 Y$ we obtain
\[
 e^{\delta_0(|X|+|Y|)}
 \le
 e^{\delta_0 X+\delta_0 Y}
 +e^{\delta_0 X-\delta_0 Y}
 +e^{-\delta_0 X+\delta_0 Y}
 +e^{-\delta_0 X-\delta_0 Y}.
\]
Each expectation on the right is finite because the joint moment generating function is finite on a neighborhood of the origin. Hence Tonelli's theorem applies to the absolutely convergent double power series and
\[
  M_{X,Y}(s,t)
  =\sum_{k=0}^\infty\sum_{k'=0}^\infty \frac{s^k t^{k'}}{k!\,k'!}\,\mathbb{E}[X^k Y^{k'}].
\]
Under $(3)$, the mixed moments factor, so
\[
  M_{X,Y}(s,t)
  =\left(\sum_{k\ge 0}\frac{s^k}{k!}\,\mathbb{E}[X^k]\right)
   \left(\sum_{k'\ge 0}\frac{t^{k'}}{k'!}\,\mathbb{E}[Y^{k'}]\right)
  =M_X(s)\,M_Y(t)
\]
for all $|s|<\delta$ and $|t|<\delta$. By uniqueness of moment generating functions on a neighborhood of the origin, the joint law equals the product law, so $X$ and $Y$ are independent.

If the factorization holds only for $k+k'\le K$, then the mixed moments of total degree at most $K$ agree by definition.
\end{proof}

\begin{corollary}[Finite-order decorrelation implies finite-order moment factorization]
Let $X$ and $Y$ be real-valued random variables, and let $K \ge 1$. Assume that the moments below are finite and that
\begin{equation}
\mathrm{Cov}(X^k,Y^{k'}) = 0
\qquad
\forall\, 1 \le k,k' \le K.
\end{equation}
Then
\begin{equation}
\mathbb{E}[X^kY^{k'}] = \mathbb{E}[X^k]\,\mathbb{E}[Y^{k'}]
\qquad
\forall\, 1 \le k,k' \le K.
\end{equation}
In particular, for every pair $(r,s)$ with $r,s \ge 0$ and $r+s \le K$,
\[
\mathbb{E}[X^rY^s] = \mathbb{E}[X^r]\,\mathbb{E}[Y^s].
\]
Equivalently, $p_{X,Y}$ and $p_Xp_Y$ agree on all mixed moments of total degree at most $K$.
For fixed finite $K$, this does not in general imply that $X$ and $Y$ are independent.
\end{corollary}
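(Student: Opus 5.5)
The first two claims follow directly from the definition of covariance. For $1\le k,k'\le K$, the finiteness assumption makes $\mathrm{Cov}(X^k,Y^{k'})=\mathbb{E}[X^kY^{k'}]-\mathbb{E}[X^k]\mathbb{E}[Y^{k'}]$ well defined, so setting it to zero gives the factorization of the mixed moment of order $(k,k')$.

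For the ``in particular'' statement we need pairs $(r,s)$ with $r+s\le K$, including those with $r=0$ or $s=0$. We split into two cases.
\begin{itemize}
\item If $r=0$ or $s=0$, the identity is trivial. For example, $\mathbb{E}[X^0Y^s]=\mathbb{E}[Y^s]=\mathbb{E}[X^0]\,\mathbb{E}[Y^s]$, since $X^0\equiv 1$.
\item If $r,s\ge 1$, then $r+s\le K$ forces $r,s\le K-1\le K$. So $(r,s)$ lies in the index range already covered by the first part.
\end{itemize}
The set $\{(r,s):r+s\le K\}$ is therefore contained in the union of the trivial boundary pairs and the square $[1,K]^2$, and factorization holds on all of it. The mixed moments of total degree at most $K$ of $p_{X,Y}$ are exactly the numbers $\mathbb{E}[X^rY^s]$. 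Those of $p_Xp_Y$ are $\mathbb{E}[X^r]\,\mathbb{E}[Y^s]$, because under the product law the coordinates are independent. This gives the stated equivalence, in the same way as the ``moreover'' clause of Proposition~\ref{prop:moment_independence}.

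The only step with real content is the final remark that finite $K$ does not imply independence. My plan is an explicit counterexample for $K=1$ that generalizes. Take $X\sim\mathcal{N}(0,1)$ and $Y=X^2$. Then $\mathrm{Cov}(X,Y)=\mathbb{E}[X^3]=0$, but $X$ and $Y$ are clearly dependent. This settles $K=1$.

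For general $K$, I would use a finite-support construction, so that all moments are automatically finite.
\begin{itemize}
\item Take a joint law on a finite grid of the form $p_Xp_Y+\eta h$. Here $h$ is a signed perturbation, and $\eta>0$ is small enough to keep the law nonnegative.
\item Require $h$ to have zero total mass and to be orthogonal to all monomials $x^ky^{k'}$ with $0\le k,k'\le K$. The orthogonality to $k'=0$ (resp.\ $k=0$) keeps the marginals unchanged.
\item These conditions are $(K+1)^2$ linear constraints. A grid with more than $(K+1)^2$ points therefore admits a nonzero such $h$.
\item Because the marginals are unchanged, the covariances $\mathrm{Cov}(X^k,Y^{k'})$ still vanish for $1\le k,k'\le K$. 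But the joint law is not the product of the marginals, so $X$ and $Y$ are dependent.
\end{itemize}
The main obstacle is verifying that such a nonzero $h$ exists and that the perturbed law stays nonnegative. Existence follows from the dimension count above. Nonnegativity holds if the base product law has strictly positive mass on every grid point and $\eta$ is chosen small enough.
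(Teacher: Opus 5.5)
Your proof of the factorization claims matches the paper's proof exactly: vanishing covariance gives the $(k,k')$ identity for $1\le k,k'\le K$, and pairs with $r=0$ or $s=0$ are trivial while pairs with $1\le r,s\le K$ are covered by the first part. The paper gives no argument for the closing non-implication remark. Your $K=1$ example $X\sim\mathcal{N}(0,1)$, $Y=X^2$ is correct, although it does not extend: for $K\ge 2$ one has $\mathrm{Cov}(X^2,Y)=\mathrm{Var}(X^2)=2$.

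Your general-$K$ construction, however, breaks at the claim ``the marginals are unchanged.'' Orthogonality of $h$ to $x^k$, $0\le k\le K$, only says that the $X$-marginal $h_X(x_i)=\sum_j h(x_i,y_j)$ has vanishing moments of orders $0,\dots,K$. That forces $h_X=0$ only when the $x$-grid has at most $K+1$ points. But if both coordinate grids have at most $K+1$ points, the monomials $x^ky^{k'}$ ($0\le k,k'\le K$) span all functions on the grid (Vandermonde), so your constraints force $h=0$. Hence whenever your dimension count produces a nonzero $h$, some coordinate grid has more than $K+1$ points, and then dependence can genuinely fail. Take $h_X\neq 0$ on that grid with $\sum_i h_X(x_i)x_i^k=0$ for $0\le k\le K$, and set $h=h_X\otimes p_Y$. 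This $h$ satisfies all $(K+1)^2$ of your constraints, yet $p_Xp_Y+\eta h=(p_X+\eta h_X)\otimes p_Y$ is a product law. Your covariance computation still holds, since it only uses orthogonality of $h$ to $x^ky^{k'}$, $x^k$ and $y^{k'}$. What fails is the inference from $h\neq 0$ to dependence.

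To repair it, impose exact marginal preservation instead of moment conditions on the marginals. Require $h$ to have zero row and column sums; on an $a\times b$ grid these $h$ form a subspace of dimension $(a-1)(b-1)$. Then add the $K^2$ constraints $\sum_{i,j}h(x_i,y_j)x_i^ky_j^{k'}=0$ for $1\le k,k'\le K$. A nonzero such $h$ exists once $(a-1)(b-1)>K^2$, e.g.\ $a=b=K+2$. For such an $h$:
\begin{itemize}
\item the marginals are literally $p_X$ and $p_Y$;
\item every $\mathrm{Cov}(X^k,Y^{k'})$ with $1\le k,k'\le K$ vanishes;
\item $h\neq 0$ means the joint law is not the product of its marginals, so $X$ and $Y$ are dependent.
\end{itemize}
An explicit instance is $p(x,y)=p_X(x)p_Y(y)\bigl(1+\eta\,\phi(x)\psi(y)\bigr)$ with small $\eta$. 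Here $\phi\neq 0$ is $p_X$-orthogonal to $1,x,\dots,x^K$, which is possible once $p_X$ has at least $K+2$ support points, and $\psi\neq 0$ has $p_Y$-mean zero.
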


\begin{proof}
For each $1 \le k,k' \le K$, the identity
\[
\mathrm{Cov}(X^k,Y^{k'})
=
\mathbb{E}[X^kY^{k'}] - \mathbb{E}[X^k]\mathbb{E}[Y^{k'}]
\]
implies that
\[
\mathbb{E}[X^kY^{k'}] = \mathbb{E}[X^k]\mathbb{E}[Y^{k'}]
\]
whenever $\mathrm{Cov}(X^k,Y^{k'})=0$. This proves the claimed factorization for all
$1 \le k,k' \le K$.

Now let $(r,s)$ satisfy $r,s \ge 0$ and $r+s \le K$. If $r=0$ or $s=0$, then
\[
\mathbb{E}[X^rY^s] = \mathbb{E}[X^r]\mathbb{E}[Y^s]
\]
holds trivially. Otherwise $1 \le r,s \le K$, so the first part applies and yields
\[
\mathbb{E}[X^rY^s] = \mathbb{E}[X^r]\mathbb{E}[Y^s].
\]
Hence $p_{X,Y}$ and $p_Xp_Y$ agree on all mixed moments of total degree at most $K$.
\end{proof}

\begin{remark}
The penalties $R_0^{(K)}$ and $R_1^{(K)}$ act on correlations between the lifted scalar features in
\eqref{eq:poly_lift}. In the limit, assuming the relevant variances are nonzero, $R_0^{(K)}(v,w)=0$ implies
\begin{equation}
\mathrm{Corr}(v_i^k,w_j^{k'}) = 0
\qquad
\forall\, i \in \{1,\ldots,m_v\},\ j \in \{1,\ldots,m_w\},\ 1 \le k,k' \le K,
\label{eq:coordwise_corr_zero}
\end{equation}
and therefore also $\mathrm{Cov}(v_i^k,w_j^{k'})=0$. Equivalently,
\begin{equation}
\mathbb{E}[v_i^k w_j^{k'}] = \mathbb{E}[v_i^k]\,\mathbb{E}[w_j^{k'}]
\label{eq:coordwise_moment_factor}
\end{equation}
for the same indices and degrees. Thus $R_0^{(K)}$ can be interpreted as encouraging coordinatewise polynomials to be uncorrelated.

Condition \eqref{eq:coordwise_corr_zero} is strictly weaker than independence of the random vectors $v$ and $w$. Vector independence would require factorization for all multivariate monomials (multi-indices), e.g.,
\begin{align}
\mathbb{E}[v_1^{k_1}\cdots v_{m_v}^{k_{m_v}} w_1^{\ell_1}\cdots w_{m_w}^{\ell_{m_w}}]
=
\mathbb{E}[v_1^{k_1}\cdots v_{m_v}^{k_{m_v}}]\,
\mathbb{E}[w_1^{\ell_1}\cdots w_{m_w}^{\ell_{m_w}}],
\qquad \forall (k,\ell).
\end{align}
The entrywise lift \eqref{eq:poly_lift} does not control mixed moments like $\mathbb{E}[v_1v_2w_j]$. Consequently, $R_0^{(K)}$ should be viewed as a practical proxy for reducing leakage between blocks, and it does not guarantee that $p(v,w)=p(v)p(w)$. If \eqref{eq:coordwise_moment_factor} holds for all degrees $k,k' \ge 0$ for a fixed pair $(i,j)$, then Proposition~\ref{prop:moment_independence}, applied to the scalar pair $(v_i,w_j)$, shows under its exponential-integrability assumption that $v_i$ and $w_j$ are independent. Even pairwise independence of every coordinate pair $(v_i,w_j)$ does not in general imply independence of the full vectors $(v,w)$.
\end{remark}
\begin{corollary}[Finite-sample convergence of the polynomial-correlation penalties]
\label{cor:finite_sample}
Let $(V,W)$ be a random pair with $V\in\mathbb{R}^{m_v}$ and $W\in\mathbb{R}^{m_w}$, and fix $K\ge 1$.
Assume:
\begin{enumerate}
  \item $\mathbb{E}[|V_i|^{4K}]<\infty$ $\forall i=1,\dots,m_v$ and $\mathbb{E}[|W_j|^{4K}]<\infty$ $\forall j=1,\dots,m_w$;
  \item $\mathrm{Var}(V_i^k) \ne 0$ and $\mathrm{Var}(W_j^{k'}) \ne 0$, $\forall 1\le i\le m_v$, $1\le j\le m_w$, and $1\le k,k'\le K$.
\end{enumerate}
Given $n$ iid\ samples $\{(V^{(\ell)},W^{(\ell)})\}_{\ell=1}^n$, let $\widehat{\mathrm{Corr}}(\cdot,\cdot)$ denote the sample correlation matrix obtained by replacing covariances/variances with their sample analogues.
Define $\widehat{R}_0^{(K)}$ and $\widehat{R}_1^{(K)}$ by \eqref{eq:R0_def}--\eqref{eq:R1_def} with $\mathrm{Corr}$ replaced by $\widehat{\mathrm{Corr}}$.
Then
\[
  \widehat{R}_0^{(K)}(V,W)\xrightarrow{\mathrm{a.s.}} R_0^{(K)}(V,W),
  \qquad
  \widehat{R}_1^{(K)}(V,W)\xrightarrow{\mathrm{a.s.}} R_1^{(K)}(V,W),
\]
and, moreover,
\[
  \widehat{R}_0^{(K)}(V,W)-R_0^{(K)}(V,W)=O_p(n^{-1/2}),
  \qquad
  \widehat{R}_1^{(K)}(V,W)-R_1^{(K)}(V,W)=O_p(n^{-1/2}).
\]
In particular, for fixed encoder parameters and iid samples, the minibatch penalties in \eqref{eq:auxvae-loss} consistently estimate their population counterparts when $(V,W)=(u,\mu_{\phi,\mathrm{rec}})$ or $(V,W)=(u,\mu_{\phi,\mathrm{aux}})$.
\end{corollary}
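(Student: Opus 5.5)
The plan is to reduce both penalties to continuous functions of a finite vector of sample moments, and then apply the strong law of large numbers for almost-sure convergence and the multivariate central limit theorem with the delta method for the $O_p(n^{-1/2})$ rate.

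\emph{Step 1: the moment vector.} Let $\Phi(V,W)$ collect the lifted features $V_i^k$, $W_j^{k'}$ for $1\le k,k'\le K$, together with all pairwise products. These products are $V_i^kV_i^k=V_i^{2k}$, $W_j^{2k'}$ and $V_i^kW_j^{k'}$, which are exactly the entries needed for the correlations appearing in \eqref{eq:R0_def}--\eqref{eq:R1_def}. Each entry of $\Phi$ has degree at most $2K$. By assumption~1, each entry therefore has a finite first moment, and in fact a finite second moment: $\mathbb{E}[V_i^{4K}]<\infty$, and $\mathbb{E}[(V_i^kW_j^{k'})^2]\le (\mathbb{E}[V_i^{4k}]\,\mathbb{E}[W_j^{4k'}])^{1/2}<\infty$ by Cauchy--Schwarz together with Lyapunov's inequality for lower powers.

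\emph{Step 2: consistency.} Each sample correlation $\widehat{\mathrm{Corr}}(V_i^k,W_j^{k'})$ can be written as $g(\bar\Phi_n)$. Here $\bar\Phi_n$ is the empirical mean of $\Phi$, and
\[
g(m)=\frac{m_{VW}-m_Vm_W}{\sqrt{(m_{VV}-m_V^2)(m_{WW}-m_W^2)}}.
\]
Whether one uses the $1/n$ or the $1/(n-1)$ normalization changes things only by a factor $1+O(1/n)$. The map $g$ is $C^\infty$ on the open set where both variance expressions are positive. By assumption~2 the population moment vector $\mathbb{E}\Phi$ lies in that set. The SLLN gives $\bar\Phi_n\to\mathbb{E}\Phi$ almost surely. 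Continuity of $g$, and then continuity of $x\mapsto|x|$ and of finite sums, yields $\widehat R_0^{(K)}\to R_0^{(K)}$ and $\widehat R_1^{(K)}\to R_1^{(K)}$ almost surely. Eventually the sample variances are positive almost surely, so the estimators are well defined for large $n$.

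\emph{Step 3: the rate.} Since $\Phi$ has finite second moments, the multivariate CLT gives $\sqrt n(\bar\Phi_n-\mathbb{E}\Phi)=O_p(1)$. The delta method applied to the smooth map $g$ then gives $\widehat{\mathrm{Corr}}-\mathrm{Corr}=O_p(n^{-1/2})$ entrywise. The maps $x\mapsto|x|$ and $x\mapsto 1-|x|$ are not differentiable at $0$, so the delta method cannot be applied to them directly. Instead I will use that they are $1$-Lipschitz: $\bigl||\hat c|-|c|\bigr|\le|\hat c-c|$. Summing finitely many such terms with fixed weights preserves $O_p(n^{-1/2})$, which gives both rate statements. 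This Lipschitz step is the main point needing care, because at exactly decorrelated populations ($c=0$) the limit distribution is non-Gaussian (a folded normal), but the rate is still retained. Checking that the moment condition $4K$ is exactly what the CLT needs, namely that the products $V_i^kW_j^{k'}$ have finite variance, is the other delicate step; it is handled by the Cauchy--Schwarz bound in Step~1.

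\emph{Step 4: the final claim.} Fix the encoder parameters, so that $\mu_\phi$ is a fixed measurable map. Then iid pairs $(x,u)$ yield iid pairs $(V,W)=(u,\mu_{\phi,\mathrm{rec}}(x))$ or $(u,\mu_{\phi,\mathrm{aux}}(x))$. Applying the result to these pairs, under the same moment and nondegeneracy conditions, gives consistency of the minibatch penalties.
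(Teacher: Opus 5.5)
Your proposal is correct and follows essentially the same route as the paper: apply the SLLN to the finitely many sample moments, then the delta method for the smooth map from (co)variances to correlations, which is smooth at the population point by the nondegeneracy assumption. The $1$-Lipschitz property of $|\cdot|$ then carries both the almost-sure limit and the $O_p(n^{-1/2})$ rate through the finite averages defining $\widehat R_0^{(K)}$ and $\widehat R_1^{(K)}$. Your Cauchy--Schwarz check that the $4K$-moment assumption gives finite variances for the degree-$2K$ products, and your remark that the estimators are eventually well defined almost surely, make explicit two points the paper's proof leaves implicit.
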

\begin{proof}
Each entry of a sample covariance matrix of lifted features is a sample mean of the form
\[
  \widehat{\mathrm{Cov}}(V_i^k,W_j^{k'})
  =
  \frac{1}{n}\sum_{\ell=1}^n 
  \Bigl(V_i^{(\ell)}\Bigr)^k \Bigl(W_j^{(\ell)}\Bigr)^{k'}
  -
  \left(\frac{1}{n}\sum_{\ell=1}^n \Bigl(V_i^{(\ell)}\Bigr)^k\right)
  \left(\frac{1}{n}\sum_{\ell=1}^n \Bigl(W_j^{(\ell)}\Bigr)^{k'}\right),
\]
and similarly for sample variances. Under assumptions (1)-(2), these sample moments obey a strong law of large numbers and a multivariate CLT yields $O_p(n^{-1/2})$ fluctuations.

For fixed $(i,j,k,k')$ and by assumption (2), the sample correlation coefficient can be written as the smooth map $(c,s_v,s_w)\ \longmapsto\ c/\sqrt{s_v s_w}$ applied to $(c,s_v,s_w)=(\widehat{\mathrm{Cov}}(V_i^k,W_j^{k'}),\widehat{\mathrm{Var}}(V_i^k),\widehat{\mathrm{Var}}(W_j^{k'}))$. The delta method then gives $\widehat{\mathrm{Corr}}(V_i^k,W_j^{k'})-\mathrm{Corr}(V_i^k,W_j^{k'})=O_p(n^{-1/2})$ for large enough $n$ almost surely. Finally, the absolute value is Lipschitz, so $\bigl||\hat R|-|R|\bigr|\le |\hat R-R|$ preserves the same rate, and finite averages over $(i,j,k,k')$ preserve both almost sure convergence and the $O_p(n^{-1/2})$ rate.
\end{proof}
The corollary above is a fixed-network iid statement: under standard moment assumptions, the minibatch dependence penalties converge to their population counterparts. During training the encoder parameters change across iterations, so this should be read as a local statistical justification for the minibatch estimators rather than as a full convergence statement for stochastic optimization.

\paragraph{DL-CFM.}
DL-CFM uses a VAE-style encoder with the same auxiliary-guided latent structure as Aux-VAE, coupled with a latent conditional flow matching generator~\cite{ganguli2025dlcfm}. A conditional flow matching model learns a time-dependent vector field $v_\theta(x_t,z,t)$ that transports samples from a simple reference distribution $p_0$ to the data distribution $p_1$ along a probability path $p_t$, by minimizing a regression loss of the form
\begin{equation}
  \mathcal{L}_{\mathrm{CFM}}
  =
  \mathbb{E}_{t,x_0,x_1,x_t}
  \bigl\|
    v_\theta(x_t,z,t) - u_t(x_t \mid x_0,x_1)
  \bigr\|_2^2,
\end{equation}
where $u_t$ is a prescribed conditional vector field and $x_t$ is drawn from the fixed probability path.

DL-CFM augments this with a KL term tying $q_\phi(z\mid x)$ to the conditional prior $p(z\mid u)$ and with the same alignment and decorrelation regularizers used in Aux-VAE, giving a loss 
\begin{align}
  \mathcal{L}_{\mathrm{DL\mbox{-}CFM}}
  &=
  \mathbb{E}_{t,z,x_0,x_1,x_t}
  \bigl\|
    v_\theta(x_t,z,t) - u_t(x_t \mid x_0,x_1)
  \bigr\|_2^2
  +
  \beta\,\mathbb{E}_{(x,u)}\,
  \mathrm{KL}\bigl(q_\phi(z\mid x)\,\|\,p(z\mid u)\bigr)
  \nonumber\\
  &\quad
  + \lambda_1 \sum_{j=1}^d
  \Bigl[
    \mathrm{Align}(u_j,\mu_{\phi,\mathrm{aux},j})
    + \mathrm{Decorr}(u_j,\mu_{\phi,\mathrm{aux},-j})
  \Bigr]
  +
  \lambda_2\,\mathrm{Decorr}(u,\mu_{\phi,\mathrm{rec}}).
  \label{eq:dlcfm-loss}
\end{align}
After training, the encoder provides an interpretable latent representation, while the CFM decoder produces conditional samples given $z$.

\section{Related Work}
\label{sec:related}

This study lies at the intersection of deep generative priors for inverse problems and auxiliary-guided or physics-aware disentanglement. We focus here on the relation between these two lines of work.

\paragraph{Deep Generative Priors for Inverse Problems.} Several studies represent an unknown field $x$ through a low-dimensional latent variable $z$ and a generator $G_\theta$, perform inference in latent space, and then map posterior samples back to the physical space~\eqref{eq:deep-prior}. Examples include GAN-based priors for physics-based inversion, transport-map interpretations of deep generative networks for Bayesian posteriors, autoencoding priors for MAP reconstruction, and hierarchical VAE priors used in plug-and-play regularization~\cite{patel2022deepprior,hou2019deepgenerativenetworks,gonzalez2022jointposterior,prost2023hvae_inverse_regularization}. These approaches can improve reconstruction quality and reduce the dimension of the inference problem, but the latent coordinates are typically not aligned with physical factors. As a result, posterior summaries in latent space are difficult to interpret.

\paragraph{Disentanglement and Physics-Aware Latent-Variable Models.} Aux-VAE and DL-CFM introduce auxiliary guidance so that a subset of latent coordinates aligns with named variables~\cite{ganguli2025enhancing,ganguli2025dlcfm,lipman2022flowmatching}. Related physics-aware VAE models introduce structural splits between known and unknown components and augment training with PDE residuals or adversarial terms~\cite{jacobsen2022disentangling,koune2025pivae}. These studies show that limited supervision and structured priors can produce interpretable latent coordinates, but they do not primarily address Bayesian inverse formulations with an explicit forward operator and likelihood.

In disentanglement more broadly, it is well understood that independence penalties alone do not identify factors without additional inductive bias~\cite{locatello2019challenging}. Relevant sources of inductive bias include structured priors, hyperpriors on latent statistics, and limited supervision~\cite{ansari2019hyperprior,jacobsen2022disentangling}. These observations motivate our emphasis on auxiliary variables $u$ and on separating encoder-side dependence control from generator-side diagnostics.

\paragraph{Present Work.} Our contribution is to combine these two threads in one Bayesian formulation. We use auxiliary-guided disentanglement to define a hierarchical prior with guided coordinates $u$ and residual coordinates $z_{\mathrm{rec}}$, and we then solve the induced inverse problem in latent space by optimization or sampling. Relative to existing deep generative priors, we introduce named latent coordinates with their own prior; relative to the disentanglement literature, we place that structure inside a Bayesian inverse problem with an explicit likelihood and forward map. The contribution is the formulation and the accompanying local analysis, rather than a new disentanglement architecture.

A complementary study constructs interpretable priors for Bayesian neural networks by selecting a parameter distribution that induces a prescribed Gaussian process covariance in function space via a Mercer expansion~\cite{alberts2026bayesian}. Unlike our approach, interpretability is attached to function-space covariance structure rather than to a disentangled latent decomposition into explicit physical factors and residual variability.

\section{Methodology: Disentangled Deep Priors}
\label{sec:method}

We use auxiliary-variable-guided disentanglement as an offline procedure to shape the latent space so that a subset of coordinates aligns with physically meaningful descriptors. The resulting generator induces a family of hierarchical priors in which prior information can be placed directly on interpretable variables, while the residual block represents variability that is difficult to model parametrically.

\subsection{Hierarchical Model}
\label{sec:hierarchical}
Let $x$ denote the unknown field or parameter of interest, and let $u \in \mathbb{R}^d$ be a user-chosen vector of auxiliary quantities associated with $x$. In the offline training stage we assume paired samples $(x,u)$ are available; $u$ may represent known simulation inputs or deterministic summaries $u=U(x)$ extracted from $x$, possibly only approximately. For example, in a PDE setting $u$ might contain coarse-scale statistics of a coefficient field. For notational simplicity we treat $u$ as already normalized in the same way as in training; physical units can be recovered by applying the inverse normalization map. The interpretation of the posterior over $u$ depends on this choice: if $u$ is a genuine generative input, then posterior uncertainty in $u$ has that physical meaning; if $u=U(x)$ is a summary, then the posterior over $u$ should be read as uncertainty over that summary induced by the prior and likelihood.

We assume that we have trained a disentangled generator $G_\theta$ with latent split
\begin{equation}
  z = (z_{\mathrm{aux}},z_{\mathrm{rec}}) \in \mathbb{R}^{d_Z},
\end{equation}
where $z_{\mathrm{aux}}\in\mathbb{R}^d$ is encouraged to align with the observed auxiliary variables $u$ (after the same normalization used in training), and $z_{\mathrm{rec}}\in\mathbb{R}^{d_Z-d}$ captures residual variability.

\paragraph{A soft-anchored prior.}
In the offline disentanglement stage, $u$ is observed and appears in the conditional prior $p(z\mid u)$ (see \S\ref{sec:background}). In the online inverse problem we instead treat $u$ as unknown and endow it with a prior $p(u)$ (often chosen as a simple distribution fit to the empirical $u$, e.g., $\mathcal{N}(0,I)$ in normalized coordinates). A hierarchical prior that keeps the roles of $u$ and $z_{\mathrm{aux}}$ distinct is:
\begin{subequations}
\begin{align}
  u &\sim p(u),
  \label{eq:prior-u}
  \\
  z_{\mathrm{aux}} \mid u &\sim \mathcal{N}(u,\tau^2 I_d),
  \label{eq:prior-zaux}
  \\
  z_{\mathrm{rec}} &\sim \mathcal{N}(0,I_{d_Z-d}),
  \label{eq:prior-zrec}
  \\
  x &= G_\theta(z_{\mathrm{aux}},z_{\mathrm{rec}}),
  \label{eq:prior-x}
  \\
  y \mid x &\sim p_\varepsilon\bigl(y-\mathcal{F}(x)\bigr).
  \label{eq:likelihood}
\end{align}
\end{subequations}
The hyperparameter $\tau^2>0$ controls how tightly the learned guided coordinates $z_{\mathrm{aux}}$ are connected to the physically meaningful $u$.

To make the implied factorization explicit, the latent prior in \eqref{eq:prior-u}--\eqref{eq:prior-zrec} can be written as
\begin{equation}
  p(u,z_{\mathrm{aux}},z_{\mathrm{rec}})
  = p(u)\,p(z_{\mathrm{aux}}\mid u)\,p(z_{\mathrm{rec}})
  = p(u)\,\mathcal{N}(z_{\mathrm{aux}};u,\tau^2 I_d)\,\mathcal{N}(z_{\mathrm{rec}};0,I_{d_Z-d}).
  \label{eq:prior-factorization}
\end{equation}
Moreover, \eqref{eq:prior-x} is deterministic, so one may equivalently write
\begin{equation}
  p(x\mid z_{\mathrm{aux}},z_{\mathrm{rec}})
  = \delta\bigl(x - G_\theta(z_{\mathrm{aux}},z_{\mathrm{rec}})\bigr),
  \label{eq:deterministic-decoder}
\end{equation}
where $\delta(\cdot)$ is a Dirac delta.
Substituting \eqref{eq:prior-x} into the likelihood \eqref{eq:likelihood} yields the latent-space likelihood
\begin{equation}
  p(y\mid z_{\mathrm{aux}},z_{\mathrm{rec}})
  = p_\varepsilon\!\left(y-\mathcal{F}\bigl(G_\theta(z_{\mathrm{aux}},z_{\mathrm{rec}})\bigr)\right)\,,
  \label{eq:likelihood-latent}
\end{equation}
and by applying Bayes' rule we obtain
\begin{equation}
  p(u,z_{\mathrm{aux}},z_{\mathrm{rec}}\mid y)
  \propto
  p(y\mid z_{\mathrm{aux}},z_{\mathrm{rec}})\,p(z_{\mathrm{aux}}\mid u)\,p(z_{\mathrm{rec}})\,p(u).
  \label{eq:posterior-factorization}
\end{equation}
Substituting \eqref{eq:likelihood-latent} and the Gaussian factors yields the explicit joint posterior
\begin{equation}
  p(u,z_{\mathrm{aux}},z_{\mathrm{rec}}\mid y)
  \propto 
  p_\varepsilon\!\left(y-\mathcal{F}\bigl(G_\theta(z_{\mathrm{aux}},z_{\mathrm{rec}})\bigr)\right)
  \,\mathcal{N}(z_{\mathrm{aux}};u,\tau^2 I_d)
  \,\mathcal{N}(z_{\mathrm{rec}};0,I_{d_Z-d})\,p(u).
  \label{eq:joint-posterior}
\end{equation}

\paragraph{A hard-constrained (collapsed) prior.}
In many applications the disentanglement training uses small $\tau^2$ and explicitness regularization, so that $z_{\mathrm{aux}}$ becomes (approximately) an identity representation of $u$. In this regime, a convenient and commonly used approximation is the hard-constrained limit $\tau\to 0$, in which $z_{\mathrm{aux}}=u$ deterministically and the generator can be written as $G_\theta(u,z_{\mathrm{rec}})$. The induced posterior is then
\begin{equation}
  p(u,z_{\mathrm{rec}} \mid y)
   \propto 
  p_\varepsilon \left(y - \mathcal{F}\bigl(G_\theta(u,z_{\mathrm{rec}})\bigr)\right)\,p(u)\,\mathcal{N}(z_{\mathrm{rec}};0,I_{d_Z-d}).
  \label{eq:latent-posterior}
\end{equation}
Either formulation gives a convenient coordinate system for posterior summaries: variation in $u$ tracks uncertainty in coarse interpretable factors, while variation in $z_{\mathrm{rec}}$ tracks residual fine-scale variability. This is a statement about how the latent coordinates are organized; it does not by itself imply posterior independence between the two blocks.

Although the induced prior on $x$ is implicit and generally non-Gaussian (as a push-forward through $G_\theta$), the factorization in \eqref{eq:prior-u}--\eqref{eq:prior-zrec} still separates the latent prior: uncertainty in the guided block is mediated through $u$ (equivalently through $z_{\mathrm{aux}}$ up to the soft constraint in \eqref{eq:prior-zaux}), while $z_{\mathrm{rec}}$ captures residual variability that is a priori independent of $u$. 

\begin{definition}[Generator tangent subspaces and overlap]
\label{def:rhoG}
Assume $G_\theta$ is differentiable at a latent point $z=(z_{\mathrm{aux}},z_{\mathrm{rec}})$. Define the tangent subspaces
\begin{align*}
&J_{\mathrm{aux}}(z) := \frac{\partial G_\theta}{\partial z_{\mathrm{aux}}}(z),
\qquad 
\mathcal{T}_{\mathrm{aux}}(z) := \operatorname{range}\!\big(J_{\mathrm{aux}}(z)\big),\\
&J_{\mathrm{rec}}(z) := \frac{\partial G_\theta}{\partial z_{\mathrm{rec}}}(z),
\qquad 
\mathcal{T}_{\mathrm{rec}}(z) := \operatorname{range}\!\big(J_{\mathrm{rec}}(z)\big).
\label{eq:tangent-subspaces}
\end{align*}
Let $Q_{\mathrm{aux}}(z)$ and $Q_{\mathrm{rec}}(z)$ be matrices with orthonormal columns spanning $\mathcal{T}_{\mathrm{aux}}(z)$ and $\mathcal{T}_{\mathrm{rec}}(z)$, respectively. Thus $\operatorname{range}(Q_{\mathrm{aux}}(z))=\mathcal{T}_{\mathrm{aux}}(z)$, $Q_{\mathrm{aux}}(z)^\top Q_{\mathrm{aux}}(z)=I$, $\operatorname{range}(Q_{\mathrm{rec}}(z))=\mathcal{T}_{\mathrm{rec}}(z)$, and $Q_{\mathrm{rec}}(z)^\top Q_{\mathrm{rec}}(z)=I$. We define the generator overlap metric
\begin{equation}
  \rho_G(z) \coloneqq \bigl\|Q_{\mathrm{aux}}(z)^\top Q_{\mathrm{rec}}(z)\bigr\|_2 \in [0,1],
  \label{eq:rhoG}
\end{equation}
which equals the cosine of the smallest principal angle between $\mathcal{T}_{\mathrm{aux}}(z)$ and $\mathcal{T}_{\mathrm{rec}}(z)$. In other words, a small $\rho_G(z)$ indicates that infinitesimal perturbations of $z_{\mathrm{aux}}$ and $z_{\mathrm{rec}}$ move $x=G_\theta(z)$ in nearly orthogonal directions.
\end{definition}

The next results are local. Under a first-order linearization of $G_\theta$, the induced covariance decomposes into guided and residual contributions. If the generator tangent subspaces have small overlap at relevant latent points, then the covariance of projected coordinates is approximately block diagonal. For inverse problems, a separate observation-space condition controls the corresponding block structure of the linearized posterior.

\paragraph{Linearized covariance structure and decoupling.} The prior in \eqref{eq:prior-u}--\eqref{eq:prior-x} is implicit: even though $(u,z_{\mathrm{aux}},z_{\mathrm{rec}})$ are endowed with simple Gaussian structure, the induced distribution on $x=G_\theta(z_{\mathrm{aux}},z_{\mathrm{rec}})$ is generally non-Gaussian. Through local linearizations of the generator and the forward operator, we make precise how the latent factorization yields an additive covariance decomposition at first order, and how approximate block structure can be quantified in terms of generator-subspace overlap and observation-space cross-sensitivities.
In the next lemma we show that, under a first-order linearization of the generator, the induced prior covariance decomposes additively into guided and residual contributions.
\begin{lemma}[Linearized induced prior covariance]
\label{lem:lin-cov}
Assume the soft-constrained prior \eqref{eq:prior-u}--\eqref{eq:prior-x} and suppose $u$ has finite second moment with mean $m_u \coloneqq \mathbb{E}[u]$ and covariance $\Sigma_u \coloneqq \mathrm{Cov}(u)$. Introduce the auxiliary noise variable
\begin{equation}
  \eta \sim \mathcal{N}(0,I_d),
  \qquad
  z_{\mathrm{aux}} \;=\; u + \tau\,\eta,
  \label{eq:zaux-as-u-plus-noise}
\end{equation}
so that \eqref{eq:prior-zaux} is satisfied and $(u,\eta,z_{\mathrm{rec}})$ are independent under the prior.

Consider the first-order (Taylor) approximation
\begin{equation}
  x_{\mathrm{lin}}
  \coloneqq
  G_\theta(m_u,0)
  + J_{\mathrm{aux}}\,(z_{\mathrm{aux}}-m_u)
  + J_{\mathrm{rec}}\,z_{\mathrm{rec}}\,
  \label{eq:x-linearization}
\end{equation}
where $J_{\mathrm{aux}}$ and $J_{\mathrm{rec}}$ are the Jacobians at $(z_{\mathrm{aux}},z_{\mathrm{rec}})=(m_u,0)$
\begin{equation}
  J_{\mathrm{aux}} \coloneqq \left.\frac{\partial G_\theta}{\partial z_{\mathrm{aux}}}\right|_{(m_u,0)} \in \mathbb{R}^{p\times d},
  \qquad
  J_{\mathrm{rec}} \coloneqq \left.\frac{\partial G_\theta}{\partial z_{\mathrm{rec}}}\right|_{(m_u,0)} \in \mathbb{R}^{p\times (d_Z-d)}.
  \label{eq:jacobians}
\end{equation}
Then $\mathbb{E}[x_{\mathrm{lin}}]=G_\theta(m_u,0)$ and the linearized prior covariance satisfies
\begin{equation}
  \mathrm{Cov}(x_{\mathrm{lin}})
  =\; J_{\mathrm{aux}}\,(\Sigma_u + \tau^2 I_d)\,J_{\mathrm{aux}}^\top
  \;+
  J_{\mathrm{rec}}\,J_{\mathrm{rec}}^\top.
  \label{eq:lin-prior-cov}
\end{equation}
In particular, the guided and residual contributions are uncorrelated at first order:
\begin{equation}
  \mathrm{Cov}\bigl(J_{\mathrm{aux}}(z_{\mathrm{aux}}-m_u),\,J_{\mathrm{rec}}z_{\mathrm{rec}}\bigr) = 0.
  \label{eq:lin-cross-cov-zero}
\end{equation}
\end{lemma}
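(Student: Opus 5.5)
The proof is a direct second-moment computation on the affine map \eqref{eq:x-linearization}, using the reparametrization \eqref{eq:zaux-as-u-plus-noise}. Substituting $z_{\mathrm{aux}}-m_u = (u-m_u)+\tau\eta$ gives
\[
  x_{\mathrm{lin}} - G_\theta(m_u,0) = J_{\mathrm{aux}}(u-m_u) + \tau J_{\mathrm{aux}}\eta + J_{\mathrm{rec}} z_{\mathrm{rec}},
\]
a sum of three deterministic linear images of the independent random vectors $u-m_u$, $\eta$ and $z_{\mathrm{rec}}$.

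\textbf{Mean.} Each of the three terms has zero mean: $\mathbb{E}[u]=m_u$, $\eta\sim\mathcal{N}(0,I_d)$ and $z_{\mathrm{rec}}\sim\mathcal{N}(0,I_{d_Z-d})$. The Jacobians are fixed matrices, evaluated at the deterministic point $(m_u,0)$, so they commute with expectation. Hence $\mathbb{E}[x_{\mathrm{lin}}]=G_\theta(m_u,0)$.

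\textbf{Covariance.} I expand $\mathrm{Cov}(x_{\mathrm{lin}})$ bilinearly into the three diagonal terms plus the cross terms.

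\emph{Diagonal terms.} They are
\[
  J_{\mathrm{aux}}\Sigma_u J_{\mathrm{aux}}^\top,\qquad
  \tau^2 J_{\mathrm{aux}}J_{\mathrm{aux}}^\top,\qquad
  J_{\mathrm{rec}}J_{\mathrm{rec}}^\top,
\]
using $\mathrm{Cov}(\eta)=I_d$ and $\mathrm{Cov}(z_{\mathrm{rec}})=I$.

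\emph{Cross terms.} Each has the form $A\,\mathrm{Cov}(a,b)\,B^\top$ for a pair of distinct variables $a,b$ among $u,\eta,z_{\mathrm{rec}}$. Independence under the prior, together with finite second moments, gives $\mathrm{Cov}(a,b)=0$, so every cross term vanishes.

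Grouping the two $J_{\mathrm{aux}}$ terms yields $J_{\mathrm{aux}}(\Sigma_u+\tau^2 I_d)J_{\mathrm{aux}}^\top + J_{\mathrm{rec}}J_{\mathrm{rec}}^\top$, which is \eqref{eq:lin-prior-cov}.

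\textbf{Zero cross-covariance.} For \eqref{eq:lin-cross-cov-zero} I compute
\[
  \mathrm{Cov}\bigl(J_{\mathrm{aux}}(z_{\mathrm{aux}}-m_u),J_{\mathrm{rec}}z_{\mathrm{rec}}\bigr)
  = J_{\mathrm{aux}}\,\mathrm{Cov}\bigl(u+\tau\eta,\,z_{\mathrm{rec}}\bigr)\,J_{\mathrm{rec}}^\top .
\]
This is zero because $z_{\mathrm{rec}}$ is independent of the pair $(u,\eta)$, by the product form \eqref{eq:prior-factorization}.

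\textbf{Main obstacle.} The only real care point is justifying the joint independence of $(u,\eta,z_{\mathrm{rec}})$ and the existence of the moments.

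For independence, I would define $\eta:=(z_{\mathrm{aux}}-u)/\tau$. Conditionally on $u$, this is standard normal by \eqref{eq:prior-zaux}, and that conditional law does not depend on $u$, so $\eta$ is independent of $u$. Independence from $z_{\mathrm{rec}}$ then follows from \eqref{eq:prior-factorization}.

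For the moments, the finite second moment of $u$ is assumed, and the Gaussian variables have all moments, so every covariance above is well defined. No Gaussianity of $u$ is needed.
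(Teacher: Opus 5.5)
Your proposal is correct and follows essentially the same route as the paper: a bilinear expansion of $\mathrm{Cov}(x_{\mathrm{lin}})$ in which the cross terms vanish because $z_{\mathrm{rec}}$ is independent of $(u,\eta)$, and $\mathrm{Cov}(u+\tau\eta)=\Sigma_u+\tau^2 I_d$ because $u$ and $\eta$ are independent. The differences are cosmetic: you split into three blocks rather than two, and you additionally justify that $\eta=(z_{\mathrm{aux}}-u)/\tau$ is independent of $u$, which the paper simply takes as part of the construction.
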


\begin{proof}
From \eqref{eq:prior-zaux} we can write $z_{\mathrm{aux}}\mid u = u + \tau\eta$ with $\eta\sim\mathcal{N}(0,I_d)$, which yields \eqref{eq:zaux-as-u-plus-noise}.
By using \eqref{eq:x-linearization} and linearity of expectation we obtain %
\begin{align*}
  \mathbb{E}[x_{\mathrm{lin}}] = G_\theta(m_u,0)
     + J_{\mathrm{aux}}\,\mathbb{E}[z_{\mathrm{aux}}-m_u]
     + J_{\mathrm{rec}}\,\mathbb{E}[z_{\mathrm{rec}}].
\end{align*}
As $\mathbb{E}[z_{\mathrm{rec}}]=0$ by \eqref{eq:prior-zrec} and $\mathbb{E}[z_{\mathrm{aux}}] = \mathbb{E}[u + \tau\eta] = m_u$, we have $\mathbb{E}[z_{\mathrm{aux}}-m_u]=0$, and hence $\mathbb{E}[x_{\mathrm{lin}}]=G_\theta(m_u,0)$.

By definition
\begin{align*}
  \mathrm{Cov}(x_{\mathrm{lin}})
  &= \mathbb{E}\Big[(J_{\mathrm{aux}}(z_{\mathrm{aux}}-m_u) + J_{\mathrm{rec}}z_{\mathrm{rec}})
                  (J_{\mathrm{aux}}(z_{\mathrm{aux}}-m_u) + J_{\mathrm{rec}}z_{\mathrm{rec}})^\top\Big]\\
  &= J_{\mathrm{aux}}\,\mathbb{E}\big[(z_{\mathrm{aux}}-m_u)(z_{\mathrm{aux}}-m_u)^\top\big] J_{\mathrm{aux}}^\top
   + J_{\mathrm{aux}}\,\mathbb{E}\big[(z_{\mathrm{aux}}-m_u)z_{\mathrm{rec}}^\top\big] J_{\mathrm{rec}}^\top \\
  &\quad
   + J_{\mathrm{rec}}\,\mathbb{E}\big[z_{\mathrm{rec}}(z_{\mathrm{aux}}-m_u)^\top\big] J_{\mathrm{aux}}^\top
   + J_{\mathrm{rec}}\,\mathbb{E}\big[z_{\mathrm{rec}}z_{\mathrm{rec}}^\top\big] J_{\mathrm{rec}}^\top.
\end{align*}
For the first inner term, using \eqref{eq:zaux-as-u-plus-noise} and independence of $u$ and $\eta$, we have 
\begin{align*}
  \mathbb{E}\big[(z_{\mathrm{aux}}-m_u)(z_{\mathrm{aux}}-m_u)^\top\big] 
  &= \mathrm{Cov}(u + \tau\eta)
   = \mathrm{Cov}(u) + \tau^2\,\mathrm{Cov}(\eta) + \tau\,\mathrm{Cov}(u,\eta) + \tau\,\mathrm{Cov}(\eta,u) \\
  &= \Sigma_u + \tau^2 I_d + \tau\cdot 0 + \tau\cdot 0
   = \Sigma_u + \tau^2 I_d.
\end{align*}
Also by independence of $z_{\mathrm{rec}}$ from $(u,\eta)$, we have $\mathbb{E}\big[(z_{\mathrm{aux}}-m_u)z_{\mathrm{rec}}^\top\big] = 0$, $\mathbb{E}\big[z_{\mathrm{rec}}(z_{\mathrm{aux}}-m_u)^\top\big] = 0$, while by \eqref{eq:prior-zrec}, $\mathbb{E} [z_{\mathrm{rec}}z_{\mathrm{rec}}^\top]=I_{d_Z-d}$. 
Substituting these identities into $\mathrm{Cov}(x_{\mathrm{lin}})$ above leads to \eqref{eq:lin-prior-cov}. The cross-covariance  \eqref{eq:lin-cross-cov-zero} is the vanishing of the two mixed terms.
\end{proof}
At the linearized level, the guided and residual latent blocks therefore contribute independently to the prior covariance of the generated field.

\begin{remark}
Lemma~\ref{lem:lin-cov} is an exact covariance identity for the linear surrogate $x_{\mathrm{lin}}$ in \eqref{eq:x-linearization}. It relies on the factorization/independence in the latent prior and on a first-order Taylor approximation. In particular, it does not assert that the true nonlinear prior covariance $\mathrm{Cov}(x)$ is block diagonal, nor that higher-order terms in $G_\theta$ are negligible away from the linearization point.
\end{remark}

In the next corollary we bound the cross-covariance between guided and residual coordinates after projecting onto the corresponding generator tangent subspaces.
\begin{corollary}[Approximate block-diagonal covariance in a disentangled basis]
\label{cor:block-diag}
In the setting of Lemma~\ref{lem:lin-cov}, let $J_{\mathrm{aux}}=Q_{\mathrm{aux}}R_{\mathrm{aux}}$ and $J_{\mathrm{rec}}=Q_{\mathrm{rec}}R_{\mathrm{rec}}$ be QR decompositions, where $Q_{\mathrm{aux}}\in\mathbb{R}^{p\times r_{\mathrm{aux}}}$ and $Q_{\mathrm{rec}}\in\mathbb{R}^{p\times r_{\mathrm{rec}}}$ have orthonormal columns spanning the tangent subspaces induced by the guided and residual latents.

Define the projected coordinates
\begin{equation}
  x_{\mathrm{aux}} \coloneqq Q_{\mathrm{aux}}^\top (x_{\mathrm{lin}}-\mathbb{E}[x_{\mathrm{lin}}]),
  \qquad
  x_{\mathrm{rec}} \coloneqq Q_{\mathrm{rec}}^\top (x_{\mathrm{lin}}-\mathbb{E}[x_{\mathrm{lin}}]).
  \label{eq:projected-coordinates}
\end{equation}
Then the off-diagonal cross-covariance block obeys the bound
\begin{equation}
  \bigl\|\,\mathrm{Cov}(x_{\mathrm{aux}},x_{\mathrm{rec}})\,\bigr\|_2
  \;\le\;
  \underbrace{\|Q_{\mathrm{aux}}^\top Q_{\mathrm{rec}}\|_2}_{\eqqcolon\,\rho\in[0,1]}
  \Bigl(
    \|R_{\mathrm{aux}}(\Sigma_u+\tau^2 I_d)R_{\mathrm{aux}}^\top\|_2
    +
    \|R_{\mathrm{rec}}R_{\mathrm{rec}}^\top\|_2
  \Bigr).
  \label{eq:cross-cov-bound}
\end{equation}
In particular, if the subspaces $\mathcal{T}_{\mathrm{aux}}(z)$ and $\mathcal{T}_{\mathrm{rec}}(z)$ are (approximately) orthogonal so that $\rho\approx 0$, then the covariance of $(x_{\mathrm{aux}},x_{\mathrm{rec}})$ is (approximately) block diagonal.
\end{corollary}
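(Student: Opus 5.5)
We compute the cross-covariance directly from the covariance identity in Lemma~\ref{lem:lin-cov} and then bound each of the resulting terms.

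Write $S \coloneqq \Sigma_u+\tau^2 I_d$ and $C\coloneqq \mathrm{Cov}(x_{\mathrm{lin}})$. By the definition \eqref{eq:projected-coordinates},
\[
\mathrm{Cov}(x_{\mathrm{aux}},x_{\mathrm{rec}}) = Q_{\mathrm{aux}}^\top C\, Q_{\mathrm{rec}}.
\]
Substituting \eqref{eq:lin-prior-cov} together with the QR factors $J_{\mathrm{aux}}=Q_{\mathrm{aux}}R_{\mathrm{aux}}$ and $J_{\mathrm{rec}}=Q_{\mathrm{rec}}R_{\mathrm{rec}}$ gives
\[
\mathrm{Cov}(x_{\mathrm{aux}},x_{\mathrm{rec}})
= (Q_{\mathrm{aux}}^\top Q_{\mathrm{aux}})\,R_{\mathrm{aux}} S R_{\mathrm{aux}}^\top\,(Q_{\mathrm{aux}}^\top Q_{\mathrm{rec}})
+ (Q_{\mathrm{aux}}^\top Q_{\mathrm{rec}})\,R_{\mathrm{rec}}R_{\mathrm{rec}}^\top\,(Q_{\mathrm{rec}}^\top Q_{\mathrm{rec}}).
\]
Since $Q_{\mathrm{aux}}$ and $Q_{\mathrm{rec}}$ have orthonormal columns, $Q_{\mathrm{aux}}^\top Q_{\mathrm{aux}}=I$ and $Q_{\mathrm{rec}}^\top Q_{\mathrm{rec}}=I$. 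The expression therefore reduces to
\[
R_{\mathrm{aux}}SR_{\mathrm{aux}}^\top\, M + M\, R_{\mathrm{rec}}R_{\mathrm{rec}}^\top,
\qquad M\coloneqq Q_{\mathrm{aux}}^\top Q_{\mathrm{rec}}.
\]
The key structural point is that each surviving term contains exactly one factor of $M$.

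Next we apply the triangle inequality and submultiplicativity of the spectral norm. This gives
\[
\bigl\|\mathrm{Cov}(x_{\mathrm{aux}},x_{\mathrm{rec}})\bigr\|_2
\le \|M\|_2\bigl(\|R_{\mathrm{aux}}SR_{\mathrm{aux}}^\top\|_2+\|R_{\mathrm{rec}}R_{\mathrm{rec}}^\top\|_2\bigr),
\]
which is exactly \eqref{eq:cross-cov-bound}.

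It remains to check that $\rho\in[0,1]$. Nonnegativity is immediate. For the upper bound, $\|M\|_2\le\|Q_{\mathrm{aux}}\|_2\|Q_{\mathrm{rec}}\|_2=1$, because matrices with orthonormal columns are isometries. This also matches the principal-angle interpretation of $\rho_G$ in Definition~\ref{def:rhoG}.

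The ``in particular'' statement follows by letting $\rho\to0$ with the bracketed factor held fixed. The bracketed factor is finite, since $S$ is a finite covariance by assumption.

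The only point needing care is the QR decomposition when a Jacobian is rank-deficient. In that case we use the thin or economy QR, so that $Q$ has exactly $r=\operatorname{rank}(J)$ orthonormal columns and $R\in\mathbb{R}^{r\times\cdot}$. Then $QR=J$ still holds and the column span of $Q$ equals $\operatorname{range}(J)$, so every identity above goes through unchanged. Beyond this, the argument is purely algebraic; the substantive modelling content is already contained in Lemma~\ref{lem:lin-cov}.
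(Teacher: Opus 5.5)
Your proof is correct and follows the paper's argument exactly. You rewrite $\mathrm{Cov}(x_{\mathrm{lin}})$ via the QR factors, sandwich it as $Q_{\mathrm{aux}}^\top(\cdot)Q_{\mathrm{rec}}$ to obtain $R_{\mathrm{aux}}(\Sigma_u+\tau^2 I_d)R_{\mathrm{aux}}^\top M + M R_{\mathrm{rec}}R_{\mathrm{rec}}^\top$ with $M=Q_{\mathrm{aux}}^\top Q_{\mathrm{rec}}$, and then apply the triangle inequality and submultiplicativity. Your extra checks ($\rho\le 1$, and the rank-deficient case, where one can simply take $R=Q^\top J$) are harmless additions beyond what the paper writes.
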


\begin{proof}
The quantity $\rho=\|Q_{\mathrm{aux}}^\top Q_{\mathrm{rec}}\|_2$ is the overlap metric $\rho_G(z)$ (Definition~\ref{def:rhoG}) evaluated at the linearization point $z=(m_u,0)$.

From Lemma~\ref{lem:lin-cov} and the QR decompositions,
\begin{align*}
  \mathrm{Cov}(x_{\mathrm{lin}})
  &= J_{\mathrm{aux}}(\Sigma_u+\tau^2 I_d)J_{\mathrm{aux}}^\top + J_{\mathrm{rec}}J_{\mathrm{rec}}^\top \\
  &= Q_{\mathrm{aux}}\underbrace{\bigl(R_{\mathrm{aux}}(\Sigma_u+\tau^2 I_d)R_{\mathrm{aux}}^\top\bigr)}_{\eqqcolon\,S_{\mathrm{aux}}}Q_{\mathrm{aux}}^\top
   + Q_{\mathrm{rec}}\underbrace{\bigl(R_{\mathrm{rec}}R_{\mathrm{rec}}^\top\bigr)}_{\eqqcolon\,S_{\mathrm{rec}}}Q_{\mathrm{rec}}^\top.
\end{align*}
Now,
\begin{align*}
  \mathrm{Cov}(x_{\mathrm{aux}},x_{\mathrm{rec}})
  &= \mathrm{Cov}\bigl(Q_{\mathrm{aux}}^\top (x_{\mathrm{lin}}-\mathbb{E}[x_{\mathrm{lin}}]),\,Q_{\mathrm{rec}}^\top (x_{\mathrm{lin}}-\mathbb{E}[x_{\mathrm{lin}}])\bigr) \\
  &= Q_{\mathrm{aux}}^\top\bigl(Q_{\mathrm{aux}}S_{\mathrm{aux}}Q_{\mathrm{aux}}^\top + Q_{\mathrm{rec}}S_{\mathrm{rec}}Q_{\mathrm{rec}}^\top\bigr)Q_{\mathrm{rec}} \\
  &= S_{\mathrm{aux}}(Q_{\mathrm{aux}}^\top Q_{\mathrm{rec}}) + (Q_{\mathrm{aux}}^\top Q_{\mathrm{rec}})S_{\mathrm{rec}}.
\end{align*}
Taking the operator norm and applying submultiplicativity,
\begin{align*}
  \|\mathrm{Cov}(x_{\mathrm{aux}},x_{\mathrm{rec}})\|_2
  &\le \|S_{\mathrm{aux}}\|_2\,\|Q_{\mathrm{aux}}^\top Q_{\mathrm{rec}}\|_2 + \|Q_{\mathrm{aux}}^\top Q_{\mathrm{rec}}\|_2\,\|S_{\mathrm{rec}}\|_2 \\
  &= \|Q_{\mathrm{aux}}^\top Q_{\mathrm{rec}}\|_2\bigl(\|S_{\mathrm{aux}}\|_2 + \|S_{\mathrm{rec}}\|_2\bigr),
\end{align*}
which is \eqref{eq:cross-cov-bound} after substituting the definitions of $S_{\mathrm{aux}}$ and $S_{\mathrm{rec}}$.
\end{proof}
This corollary gives a concrete way to connect tangent-space overlap with block structure in the projected linearized coordinates. For the next development we consider the collapsed (hard-constraint) model \eqref{eq:latent-posterior} so that $x=G_\theta(u,z_{\mathrm{rec}})$. 
\begin{proposition}[Uniqueness and posterior decoupling under linearization]
\label{prop:identifiability}
Define the forward map in latent coordinates
\begin{equation}
  H(u,z_{\mathrm{rec}}) \coloneqq \mathcal{F}\bigl(G_\theta(u,z_{\mathrm{rec}})\bigr) \in \mathbb{R}^m.
  \label{eq:H-def}
\end{equation}
Assume $H$ is continuously differentiable at a reference point $(u_0,z_0)$ and let the Jacobian blocks be
\begin{equation}
  A_u \coloneqq \left.\frac{\partial H}{\partial u}\right|_{(u_0,z_0)} \in \mathbb{R}^{m\times d},
  \qquad
  A_{\mathrm{rec}} \coloneqq \left.\frac{\partial H}{\partial z_{\mathrm{rec}}}\right|_{(u_0,z_0)} \in \mathbb{R}^{m\times (d_Z-d)}.
  \label{eq:A-blocks}
\end{equation}
Consider the local linearization
\begin{equation}
  y \approx H(u_0,z_0) + A_u\,(u-u_0) + A_{\mathrm{rec}}\,(z_{\mathrm{rec}}-z_0) + \varepsilon,
  \qquad \varepsilon\sim\mathcal{N}(0,\Sigma_\varepsilon).
  \label{eq:lin-forward}
\end{equation}
Assume Gaussian priors $u\sim\mathcal{N}(m_u,\Sigma_u)$ with $\Sigma_u\succ 0$ and $z_{\mathrm{rec}}\sim\mathcal{N}(0,I)$, independent.

\begin{enumerate}
  \item \textbf{Unique posterior mean/MAP.} The negative log-posterior under the linear model \eqref{eq:lin-forward} is a strictly convex quadratic in $(u,z_{\mathrm{rec}})$, hence admits a unique minimizer. Equivalently, the linearized Gaussian posterior has a unique mean and coincident MAP.

  \item \textbf{Decoupling condition.} If
  \begin{equation}
    A_u^\top\Sigma_\varepsilon^{-1}A_{\mathrm{rec}} = 0,
    \label{eq:weighted-orthogonality}
  \end{equation}
  then the Gaussian posterior factorizes as
  $p(u,z_{\mathrm{rec}}\mid y) = p(u\mid y)\,p(z_{\mathrm{rec}}\mid y)$, i.e., the posterior covariance is block diagonal between the interpretable coordinates and the residual coordinates.

  \item \textbf{Noiseless local injectivity.} If the combined Jacobian $A\coloneqq [A_u\;A_{\mathrm{rec}}]\in\mathbb{R}^{m\times d_Z}$ has full column rank, then the linearized map $(u,z_{\mathrm{rec}})\mapsto H(u_0,z_0)+A_u(u-u_0)+A_{\mathrm{rec}}(z_{\mathrm{rec}}-z_0)$ is injective. Equivalently, in the noiseless linearized model $\varepsilon\to 0$, $(u,z_{\mathrm{rec}})$ are uniquely determined by $y$ (up to higher-order terms ignored by the linearization).
\end{enumerate}
\end{proposition}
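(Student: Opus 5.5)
The plan is to write the negative log-posterior of the linearized model explicitly and read all three claims off its Hessian. Set $w := (u, z_{\mathrm{rec}})$, $A := [A_u\;A_{\mathrm{rec}}]$, and $\tilde y := y - H(u_0,z_0) + A_u u_0 + A_{\mathrm{rec}} z_0$. Then \eqref{eq:lin-forward} reads $\tilde y = A w + \varepsilon$, and up to an additive constant
\begin{equation*}
  \Phi(w) = \tfrac12 (\tilde y - A w)^\top \Sigma_\varepsilon^{-1}(\tilde y - A w)
  + \tfrac12 (u-m_u)^\top \Sigma_u^{-1}(u-m_u) + \tfrac12 z_{\mathrm{rec}}^\top z_{\mathrm{rec}}.
\end{equation*}
Its Hessian is constant:
\begin{equation*}
  P := A^\top \Sigma_\varepsilon^{-1} A + \mathrm{blkdiag}(\Sigma_u^{-1}, I)
  = \begin{bmatrix} A_u^\top\Sigma_\varepsilon^{-1}A_u + \Sigma_u^{-1} & A_u^\top\Sigma_\varepsilon^{-1}A_{\mathrm{rec}} \\ A_{\mathrm{rec}}^\top\Sigma_\varepsilon^{-1}A_u & A_{\mathrm{rec}}^\top\Sigma_\varepsilon^{-1}A_{\mathrm{rec}} + I \end{bmatrix}.
\end{equation*}

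\textbf{Part 1.} The first term of $P$ is positive semidefinite (this uses $\Sigma_\varepsilon\succ 0$, which I take as implicit in the Gaussian noise model). The second term is positive definite because $\Sigma_u\succ 0$. Hence $P\succ 0$ and $\Phi$ is a strictly convex quadratic. Completing the square gives $\Phi(w) = \tfrac12 (w-\hat w)^\top P (w-\hat w) + \mathrm{const}$ with $\hat w = P^{-1}\bigl(A^\top\Sigma_\varepsilon^{-1}\tilde y + (\Sigma_u^{-1}m_u, 0)\bigr)$. So the posterior is $\mathcal N(\hat w, P^{-1})$, whose unique mean is also its unique MAP.

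\textbf{Part 2.} Under \eqref{eq:weighted-orthogonality} the off-diagonal blocks of $P$ vanish, so $P$ is block diagonal. Its inverse, the posterior covariance, is then block diagonal as well. For a jointly Gaussian vector, zero cross-covariance is equivalent to independence; equivalently, the quadratic form $\Phi$ splits into a function of $u$ plus a function of $z_{\mathrm{rec}}$, since the cross term $u^\top A_u^\top\Sigma_\varepsilon^{-1}A_{\mathrm{rec}} z_{\mathrm{rec}}$ disappears. The density therefore factorizes as $p(u\mid y)\,p(z_{\mathrm{rec}}\mid y)$.

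\textbf{Part 3.} The linearized map is affine, $w\mapsto c + A w$ with constant $c$. If $A w_1 = A w_2$, then $A(w_1-w_2)=0$, and full column rank forces $w_1=w_2$. Noiseless recovery is then $w = (A^\top A)^{-1}A^\top (y - c)$.

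I expect no step to be a real obstacle. The only points needing care are stating that $\Sigma_\varepsilon\succ 0$ is assumed, and noting that Part 1 does not need full column rank of $A$ because the prior precision alone supplies strict convexity.
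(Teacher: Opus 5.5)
Your proposal is correct and follows the paper's proof essentially step for step. The paper also writes the posterior precision as $\Lambda=\Sigma_0^{-1}+A^\top\Sigma_\varepsilon^{-1}A$ with $\Sigma_0=\mathrm{diag}(\Sigma_u,I)$, gets strict convexity from $\Sigma_0^{-1}\succ 0$ and $\Sigma_\varepsilon\succ 0$, obtains block-diagonal covariance (hence independence, since the posterior is Gaussian) from the vanishing off-diagonal block, and proves injectivity via $A\xi=0\Rightarrow\xi=0$; your explicit completed-square mean $\hat w$ and least-squares recovery formula are harmless additions that the paper omits.
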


\begin{proof}
Under \eqref{eq:lin-forward}, the linearized likelihood is Gaussian,
\begin{equation*}
  p(y\mid u,z_{\mathrm{rec}})
  \propto
  \exp\Bigl(-\tfrac12\bigl\|y - H(u_0,z_0) - A_u(u-u_0) - A_{\mathrm{rec}}(z_{\mathrm{rec}}-z_0)\bigr\|_{\Sigma_\varepsilon^{-1}}^2\Bigr).
\end{equation*}
Up to additive constants, the negative log-posterior is
\begin{align*}
  \Phi(u,z_{\mathrm{rec}})
  &= \tfrac12\bigl\|y - H(u_0,z_0) - A_u(u-u_0) - A_{\mathrm{rec}}(z_{\mathrm{rec}}-z_0)\bigr\|_{\Sigma_\varepsilon^{-1}}^2 \\
  &\quad + \tfrac12\|u-m_u\|_{\Sigma_u^{-1}}^2 + \tfrac12\|z_{\mathrm{rec}}\|_2^2\,,
\end{align*}
which is a quadratic function in $(u,z_{\mathrm{rec}})$.
The Hessian which corresponds to the Gaussian posterior precision is the block matrix
\begin{equation*}
  \Lambda
  =
  \begin{bmatrix}
    A_u^\top\Sigma_\varepsilon^{-1}A_u + \Sigma_u^{-1} & A_u^\top\Sigma_\varepsilon^{-1}A_{\mathrm{rec}}\\
    A_{\mathrm{rec}}^\top\Sigma_\varepsilon^{-1}A_u & A_{\mathrm{rec}}^\top\Sigma_\varepsilon^{-1}A_{\mathrm{rec}} + I
  \end{bmatrix}.
\end{equation*}

Let $A\coloneqq [A_u\;A_{\mathrm{rec}}]\in\mathbb{R}^{m\times d_Z}$ and let the (block-diagonal) prior covariance be
\begin{equation*}
  \Sigma_0 \coloneqq \mathrm{diag}(\Sigma_u, I) \in \mathbb{R}^{d_Z\times d_Z}.
\end{equation*}
Then the corresponding prior precision is $\Sigma_0^{-1}\succ 0$. Moreover, since $\Sigma_\varepsilon\succ 0$ we have $A^\top\Sigma_\varepsilon^{-1}A \succeq 0$.
The posterior precision can be written as 
\begin{equation*}
  \Lambda = \Sigma_0^{-1} + A^\top\Sigma_\varepsilon^{-1}A,,
\end{equation*}
which implies $\Lambda\succ 0$ and hence $\Phi$ is strictly convex and has a unique minimizer (Claim 1).

If \eqref{eq:weighted-orthogonality} holds, then $\Lambda$ is block diagonal. Its inverse, which is the posterior covariance matrix, is therefore also block diagonal. Since the posterior under \eqref{eq:lin-forward} is jointly Gaussian, block-diagonal covariance is equivalent to independence between $u$ and $z_{\mathrm{rec}}$. This proves Claim 2.

If $A=[A_u\;A_{\mathrm{rec}}]$ has full column rank, then $A\xi=0$ implies $\xi=0$. In particular, if
\begin{equation*}
  H(u_0,z_0) + A_u(u-u_0) + A_{\mathrm{rec}}(z_{\mathrm{rec}}-z_0)
  =
  H(u_0,z_0) + A_u(u'-u_0) + A_{\mathrm{rec}}(z'_{\mathrm{rec}}-z_0),
\end{equation*}
then $A_u(u-u') + A_{\mathrm{rec}}(z_{\mathrm{rec}}-z'_{\mathrm{rec}})=0$, i.e. $A\xi=0$ with $\xi=(u-u',\,z_{\mathrm{rec}}-z'_{\mathrm{rec}})$, and hence $\xi=0$ (Claim 3).

\end{proof}

\subsection{Prior Construction via Disentanglement}

The offline training objectives (Aux-VAE, DL-CFM) are designed so that the learned latent coordinates admit the same interpretation as the structured prior introduced in Section~\ref{sec:hierarchical}.

First, the KL term in~\eqref{eq:vae-elbo} (and its DL-CFM analogue) pushes the encoder distribution $q_\phi(z\mid x)$ toward the conditional prior $p(z\mid u)$ in~\eqref{eq:aux-prior}, which already encodes a soft constraint of $z_{\mathrm{aux}}$ to $u$ and an (approximately) factorized residual prior on $z_{\mathrm{rec}}$.

To connect this pointwise KL regularization to a statement that matches the training objective, let $p_{\mathrm{data}}(x,u)$ denote the empirical training distribution and define the aggregated (expected) variational posterior conditional on $u$ by
\begin{equation}
  q_\phi(z\mid u)
  \coloneqq
  \int q_\phi(z\mid x)\,p_{\mathrm{data}}(x\mid u)\,dx.
  \label{eq:agg-posterior}
\end{equation}
By convexity of $\mathrm{KL}(\cdot\|\cdot)$ in its first argument, for each fixed $u$ we have the Jensen-type bound
\begin{equation}
  \mathrm{KL}\bigl(q_\phi(z\mid u)\,\|\,p(z\mid u)\bigr)
  \le
  \mathbb{E}_{x\sim p_{\mathrm{data}}(\cdot\mid u)}\Big[\mathrm{KL}\bigl(q_\phi(z\mid x)\,\|\,p(z\mid u)\bigr)\Big].
  \label{eq:kl-jensen}
\end{equation}
Finally, averaging over $u\sim p_{\mathrm{data}}$ yields
\begin{equation}
  \mathbb{E}_{u\sim p_{\mathrm{data}}}\Big[\mathrm{KL}\bigl(q_\phi(z\mid u)\,\|\,p(z\mid u)\bigr)\Big]
  \le
  \mathbb{E}_{(x,u)\sim p_{\mathrm{data}}}\Big[\mathrm{KL}\bigl(q_\phi(z\mid x)\,\|\,p(z\mid u)\bigr)\Big],
  \label{eq:pop-kl-bound}
\end{equation}
which is the KL term used (up to the scalar weight $\beta$) in the training losses \eqref{eq:auxvae-loss} and \eqref{eq:dlcfm-loss}. In this sense, minimizing the training objective controls the mismatch between the conditional prior $p(z\mid u)$ and the encoder's aggregated posterior $q_\phi(z\mid u)$.

Second, the correlation-based penalties in~\eqref{eq:auxvae-loss} and \eqref{eq:dlcfm-loss} target the desired disentanglement properties at the level of encoder statistics (estimated via minibatches of the encoder mean $\mu_\phi(x)$). In the limit, as described in ~\cite{ganguli2025dlcfm,ganguli2025enhancing}, these penalties encourage:
\begin{subequations}
\begin{align}
  \text{(Explicitness)}\quad
    &\mathrm{Corr}(u_j,\mu_{\phi,\mathrm{aux},j}) \approx \pm 1,
    \qquad j=1,\dots,d,
    \\
  \text{(Intra-independence)}\quad
    &\mathrm{Corr}(u_j,\mu_{\phi,\mathrm{aux},j'}) \approx 0,
    \qquad j\neq j',
    \\
  \text{(Inter-independence)}\quad
    &\mathrm{Corr}(u,\mu_{\phi,\mathrm{rec}}) \approx 0.
\end{align}
\end{subequations}
These properties support an encoder-level interpretation: the first $d$ encoder coordinates can be treated as aligned with $u$ and the remaining coordinates as a residual block that is weakly dependent on $u$ under the aggregated posterior. This does not by itself imply a generator-side decoupling such as small tangent-space overlap $\rho_G$ in Definition~\ref{def:rhoG} (equivalently, the small-$\rho$ regime in Corollary~\ref{cor:block-diag}). In other words, Aux-VAE/DL-CFM regularizers act on statistics of $q_\phi(z\mid x)$ (e.g., encoder means and KL terms) and do not directly control Jacobian cross-sensitivities of $G_\theta$. Whenever we invoke $\rho_G\ll 1$ (or small observation-space cross-sensitivity $A_u^\top\Sigma_\varepsilon^{-1}A_{\mathrm{rec}}$ in Proposition~\ref{prop:identifiability}), we treat it as either (a) an additional empirical hypothesis to be validated by diagnostic measurements (e.g., principal angles between $\mathcal{T}_{\mathrm{aux}}$ and $\mathcal{T}_{\mathrm{rec}}$ on latent samples), or (b) a property to be explicitly enforced by augmenting training with a generator-side penalty that discourages cross-sensitivities, e.g.
\begin{equation}
  \mathcal{L}_{\mathrm{J\mbox{-}ortho}} \,\coloneqq\, \mathbb{E}_{z}\left[\left\|\left(\frac{\partial G_\theta}{\partial z_{\mathrm{aux}}}(z)\right)^\top\left(\frac{\partial G_\theta}{\partial z_{\mathrm{rec}}}(z)\right)\right\|_F^2\right].
  \label{eq:j-ortho}
\end{equation}

Corollary~\ref{cor:block-diag} and Proposition~\ref{prop:identifiability} suggest complementary, model-side diagnostics that connect training-time regularization to the behavior of the induced (linearized) prior/posterior: (i) estimate $\rho_G(z)=\|Q_{\mathrm{aux}}(z)^\top Q_{\mathrm{rec}}(z)\|_2$ on latent samples to quantify generator tangent overlap; (ii) estimate the observation cross-sensitivity $\|A_u^\top\Sigma_\varepsilon^{-1}A_{\mathrm{rec}}\|_2$ (or its normalized variant) at representative posterior points; and (iii) measure empirical posterior correlations between inferred $u$ and $z_{\mathrm{rec}}$ from latent-space sampling. These checks are inexpensive relative to retraining and help distinguish encoder-side disentanglement from generator-/posterior-level coupling.

\subsection{Latent-Space Inference}

Inference can be carried out in the low-dimensional latent space. Assuming a Gaussian noise model $\varepsilon \sim \mathcal{N}(0,\Sigma_\varepsilon)$ with known covariance, the log-posterior for the soft-constrained model~\eqref{eq:joint-posterior} is
\begin{align}
  \log p(u,z_{\mathrm{aux}},z_{\mathrm{rec}} \mid y)
  &=
  -\tfrac12\Bigl\|
    y - \mathcal{F}\bigl(G_\theta(z_{\mathrm{aux}},z_{\mathrm{rec}})\bigr)
  \Bigr\|_{\Sigma_\varepsilon^{-1}}^2
  -\tfrac1{2\tau^2}\|z_{\mathrm{aux}}-u\|_2^2
  + \log p(u)
  - \tfrac12\|z_{\mathrm{rec}}\|_2^2 + C,
  \label{eq:latent-logpost-full}
\end{align}
where $\|v\|_{A}^2 = v^\top A v$ and $C$ collects constants.

In the hard-constrained approximation $z_{\mathrm{aux}}=u$ (equivalently $\tau\to 0$), this reduces to inference in $(u,z_{\mathrm{rec}})$ with log-posterior
\begin{align}
  \log p(u,z_{\mathrm{rec}} \mid y)
  &=
  -\tfrac12\Bigl\|
    y - \mathcal{F}\bigl(G_\theta(u,z_{\mathrm{rec}})\bigr)
  \Bigr\|_{\Sigma_\varepsilon^{-1}}^2
  + \log p(u) - \tfrac12\|z_{\mathrm{rec}}\|_2^2 + C.
  \label{eq:latent-logpost}
\end{align}

Gradients with respect to $u$ and $z_{\mathrm{rec}}$ can be obtained via automatic differentiation through the generator $G_\theta$ and the forward model $\mathcal{F}$. This enables a range of inference algorithms. MAP estimation maximizes~\eqref{eq:latent-logpost} via gradient-based optimization (e.g., L-BFGS) to obtain a point estimate $(\hat u,\hat z_{\mathrm{rec}})$ and reconstruction $\hat x = G_\theta(\hat u,\hat z_{\mathrm{rec}})$. Variational inference chooses a variational family $q_\psi(u,z_{\mathrm{rec}})$, such as a mean-field Gaussian or a low-rank Gaussian, and optimizes an evidence lower bound. MCMC or related algorithms in latent space generate approximate posterior samples of $(u,z_{\mathrm{rec}})$, and push them through $G_\theta$ to obtain samples in the $x$-space. Because the latent dimension $d_Z$ is typically much smaller than the dimension of $x$, these methods avoid sampling directly in the full state space. Whether this yields a computational gain still depends on the cost of evaluating and differentiating $\mathcal{F}\circ G_\theta$.

\subsection{Computational Workflow}

We summarize the computational workflow for constructing and using a disentangled deep prior.

\paragraph{Offline training stage.} Suppose we have access to a dataset of fields $\{x_i\}_{i=1}^N$, either from simulations or from high-quality measurements, along with associated auxiliary variables $\{u_i\}_{i=1}^N$ that encode known generative factors. The offline stage consists of: (1) selecting either an Aux-VAE or a DL-CFM model with a latent partition $z = (z_{\mathrm{aux}},z_{\mathrm{rec}})$ and conditional prior $p(z\mid u)$ as in~\eqref{eq:aux-prior}; (2) training the model by minimizing either $\mathcal{L}_{\mathrm{AuxVAE}}$~\eqref{eq:auxvae-loss} or $\mathcal{L}_{\mathrm{DL\mbox{-}CFM}}$~\eqref{eq:dlcfm-loss}, using minibatch estimates of the correlation-based regularizers; and (3) assessing whether $z_{\mathrm{aux}}$ has the desired alignment with $u$ and whether $z_{\mathrm{rec}}$ behaves as a residual block. Relevant diagnostics include scatter plots of $u_j$ vs.\ $z_{\mathrm{aux},j}$, dependence metrics such as the linear disentanglement score, latent traversals where $z_{\mathrm{aux}}$ or $z_{\mathrm{rec}}$ is perturbed while the other block is fixed, and generator-side checks such as estimating $\rho_G(z)$ (Definition~\ref{def:rhoG}) or $\|J_{\mathrm{aux}}(z)^\top J_{\mathrm{rec}}(z)\|$ on latent samples.

\paragraph{Prior definition.} After training, we freeze the generator $G_\theta$ and the latent prior structure, and use~\eqref{eq:prior-u}--\eqref{eq:prior-x} as a prior in the inverse problem. The prior over $u$ may be chosen according to physics-based considerations: a Gaussian prior on coarse summary statistics, a uniform prior on physically admissible ranges, or a mixture prior representing different regimes (e.g., multiple operational modes of a device).

\paragraph{Online inversion.} Given observed data $y$, we can infer $p(u,z_{\mathrm{rec}}\mid y)$ using MAP, variational methods, or latent-space MCMC. Summaries for $u$ (interpretable) and for residual variability induced by $z_{\mathrm{rec}}$ can be used post-hoc.

\section{Numerical Experiments: Elliptic PDE Inverse Problems}
\label{sec:experiments_2d}

We evaluate the proposed approach on two 2D steady-state heat problems on the
unit square $\Omega=[0,1]^2$, discretized on a uniform $28\times 28$ grid with
$676$ interior degrees of freedom. The unknown is a positive field represented
in log-coordinates, $x=\log\kappa$ (conductivity) or $x=\log b$ (source), so
that $\kappa=\exp(x)$ or $b=\exp(x)$ elementwise. Observations are noisy
temperature values on a subset of grid points.

For each problem we compare four inference strategies:
\begin{enumerate}
  \item \textbf{AuxVAE} (proposed): Structured VAE prior with disentangled
        auxiliary latent variables and soft-constrained posterior.
  \item \textbf{Plain VAE}: Standard VAE prior without auxiliary structure
        (ELBO training only).
  \item \textbf{GP-Fixed}: Gaussian process prior with oracle (true)
        hyperparameters --- a best-case baseline.
  \item \textbf{GP-Hier}: Hierarchical GP with free hyperparameters sampled
        jointly with the whitened field.
\end{enumerate}

\subsection{Forward model and observation model}
\label{sec:forward_model}

For the conductivity problem, given $\kappa(s)>0$ we solve
\begin{equation}
  -\nabla\cdot\bigl(\kappa(s)\,\nabla T(s)\bigr) = 50
  \quad \text{in }\Omega,
  \qquad T\big|_{\partial\Omega} = 0,
  \label{eq:pde_cond}
\end{equation}
with homogeneous Dirichlet boundary conditions and a constant volumetric
source. Let $T(\kappa)\in\mathbb{R}^{p}$ denote the discrete temperature field.

For the source identification problem, given $b(s)>0$ we solve
\begin{equation}
  -\Delta T(s) = b(s)
  \quad \text{in }\Omega,
  \qquad T\big|_{\partial\Omega} = 0,
  \label{eq:pde_source}
\end{equation}
with known unit conductivity ($\kappa\equiv 1$) and unknown source field,
and denote the discrete solution by $T(b)\in\mathbb{R}^{p}$.

Let $M\in\{0,1\}^{m\times p}$ be a masking matrix that selects observed grid
points. We observe
\begin{equation}
  y = M\,T(x) + \varepsilon,
  \qquad \varepsilon \sim \mathcal{N}(0,\,\sigma^2 I_m),
  \label{eq:obs_model}
\end{equation}
where $T(x)$ denotes the discrete PDE solution after mapping from $x$ to the
physical coefficient. The negative log-likelihood is proportional to
\begin{equation}
  \Phi(x;\,y) \;:=\; \frac{1}{2\sigma^2}\,\|y - M\,T(x)\|_2^2.
  \label{eq:nll}
\end{equation}

\subsection{Disentangled deep prior and latent-space inference}
\label{sec:prior_inference}

We train an Aux-VAE offline on pairs $\{(x_i,u_i)\}$, where
$u\in\mathbb{R}^d$ are interpretable dataset-specific factors. The latent is
partitioned as $z=(z_{\mathrm{aux}},z_{\mathrm{rec}})$ with
$\dim(z_{\mathrm{aux}})=d$ and total dimension $d_Z$. During training we use a
soft-tether conditional prior
\begin{equation}
  p(z \mid u_{\mathrm{norm}}) =
  \mathcal{N}\!\left(
    \begin{bmatrix} u_{\mathrm{norm}} \\ 0 \end{bmatrix},\;
    \operatorname{diag}(\tau^2 I_d,\, I_{d_Z-d})
  \right),
  \qquad \tau_{\mathrm{train}}=0.1,
  \label{eq:soft_tether}
\end{equation}
where $u_{\mathrm{norm}}$ denotes the normalized auxiliary variables used in
training. We place a standard normal prior on the normalized auxiliary
variables, $p(u_{\mathrm{norm}})=\mathcal{N}(0,I_d)$. The posterior density
over $(u_{\mathrm{norm}},z_{\mathrm{rec}})$ is
\begin{equation}
  p(u_{\mathrm{norm}},z_{\mathrm{rec}}\mid y)
  \;\propto\;
  \exp\!\bigl(-\Phi(G_\theta(u_{\mathrm{norm}},z_{\mathrm{rec}});\,y)\bigr)\,
  \exp\!\bigl(-\tfrac{1}{2}\|u_{\mathrm{norm}}\|_2^2\bigr)\,
  \exp\!\bigl(-\tfrac{1}{2}\|z_{\mathrm{rec}}\|_2^2\bigr).
  \label{eq:posterior}
\end{equation}

We compute a MAP estimate by gradient-based optimization of the negative
log-posterior, and draw posterior samples using Hamiltonian Monte Carlo (HMC)
in the $d_Z$-dimensional latent space. Because $d_Z \ll p$, HMC is
computationally feasible even though the unknown field is high-dimensional.

\subsection{Problem-specific setup}
\label{sec:setup_2d}

\paragraph{Conductivity problem.}
The forward model is the steady-state diffusion equation~\eqref{eq:pde_cond}
with unknown log-conductivity $\log\kappa$.
Training fields are sampled from an anisotropic squared-exponential GP
parameterised by $\mathbf{u}=(\mu,\sigma,\ell_x,\ell_y)$, with
$\mu\!\in\![-0.5,0.5]$, $\sigma\!\in\![0.2,0.7]$,
$\ell_x,\ell_y\!\in\![0.05,0.35]$.
We generate $N_{\mathrm{train}}=12{,}000$ samples
(Figure~\ref{fig:training_history}a shows the training loss curves).
The test case uses $\mu^*\!=\!0.30$, $\sigma^*\!=\!0.40$,
$\ell_x^*\!=\!0.20$, $\ell_y^*\!=\!0.15$.
Observations are collected at $34$ randomly placed interior sensors ($5\%$
coverage) with additive Gaussian noise $\sigma_{\mathrm{obs}}=0.50$.
This is a \emph{well-specified} setting for the GP baselines, since the
true field is indeed a GP draw.

\paragraph{Source identification problem.}
The forward model is the Poisson equation~\eqref{eq:pde_source}
with known $\kappa=1$ and unknown log-source field $\log b$.
Each training field is constructed as a dominant Gaussian bump
$b(\mathbf{s}) = A\exp\bigl(-\|\mathbf{s}-\mathbf{c}\|^2/(2w^2)\bigr)$
plus small random residual bumps, with auxiliary parameters
$\mathbf{u}=(A,c_x,c_y,w)$.
Ranges: $A\!\in\![20,80]$, $c_x,c_y\!\in\![0.2,0.8]$, $w\!\in\![0.04,0.12]$.
We use $N_{\mathrm{train}}=5{,}000$ samples following the DeepGenPrior setup
(Figure~\ref{fig:training_history}b).
The test case uses $A^*\!=\!50$, $c_x^*\!=\!0.45$, $c_y^*\!=\!0.55$,
$w^*\!=\!0.08$.
Observations: $202$ sensors ($30\%$ interior coverage),
$\sigma_{\mathrm{obs}}=0.01$.
This is a \emph{misspecified} setting for the GP baselines: the true source
is a localised Gaussian bump, not a GP draw.

\paragraph{AuxVAE architecture.}
The encoder uses a Conv1D backbone (base channels $24$ for conductivity,
$16$ for source, depth~$3$) with adaptive max-pooling.
The decoder is an MLP.
Latent dimensions: $d_{\mathrm{aux}}=4$ (matching the physical parameters),
$d_{\mathrm{rec}}=48$ (conductivity) or $16$ (source).
Training uses the disentangled ELBO with polynomial-lift penalties
($\lambda_1,\lambda_2,\lambda_3$) and a $200$-epoch reconstruction
warmup followed by end-to-end optimisation.
Figure~\ref{fig:disentanglement} shows the learned auxiliary--parameter
correspondence for both problems.

\begin{figure}[htbp]
\centering
\begin{subfigure}[t]{0.48\textwidth}
  \centering
  \includegraphics[width=\textwidth]{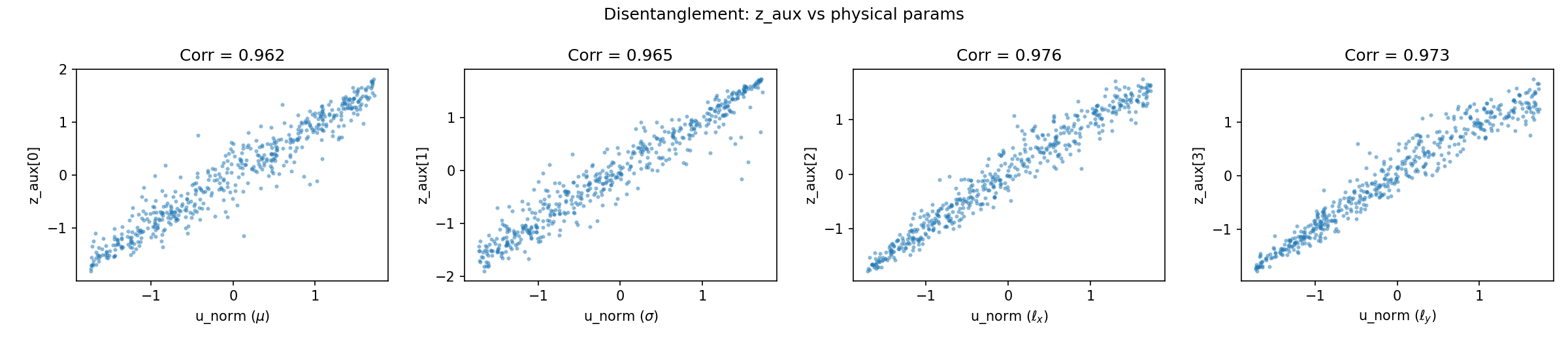}
  \caption{Conductivity}
\end{subfigure}\hfill
\begin{subfigure}[t]{0.48\textwidth}
  \centering
  \includegraphics[width=\textwidth]{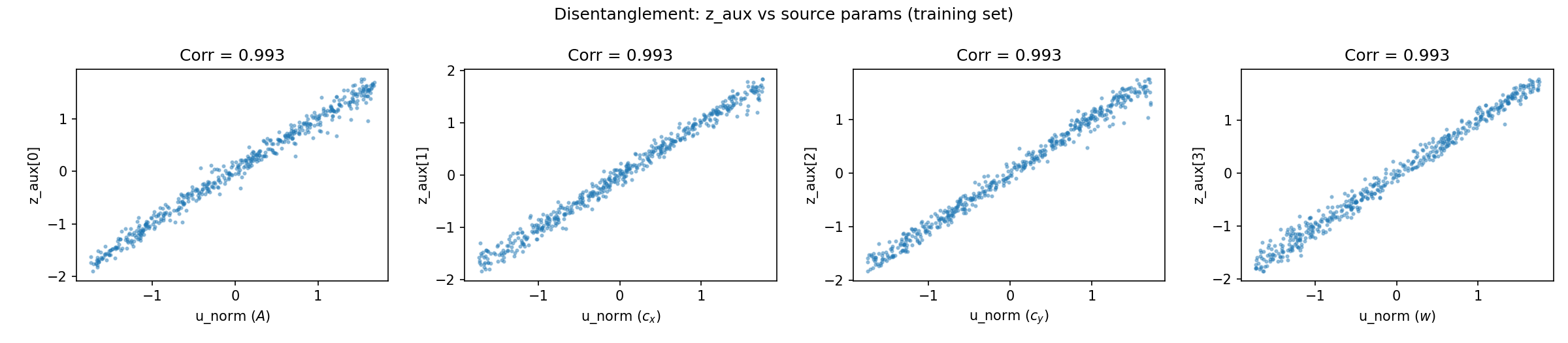}
  \caption{Source}
\end{subfigure}
\caption{Disentanglement scatter plots: learned auxiliary latent variables
  vs.\ true physical parameters for (a)~conductivity and (b)~source problems.
  Generator-level cross-sensitivity is moderate in both cases
  ($J_{\mathrm{ortho,norm}} = 0.139$ for conductivity, $0.086$ for source).
  Posterior-level decoupling, measured as mean $|\mathrm{Corr}(u_i, z_{\mathrm{rec},j})|$
  from HMC samples, is $0.030$ (max $0.57$) for conductivity and
  $0.147$ (max $0.47$) for source, indicating moderate to weak residual coupling
  between the interpretable and residual latent blocks.}

\label{fig:disentanglement}
\end{figure}

\begin{figure}[htbp]
\centering
\begin{subfigure}[t]{0.48\textwidth}
  \centering
  \includegraphics[width=\textwidth]{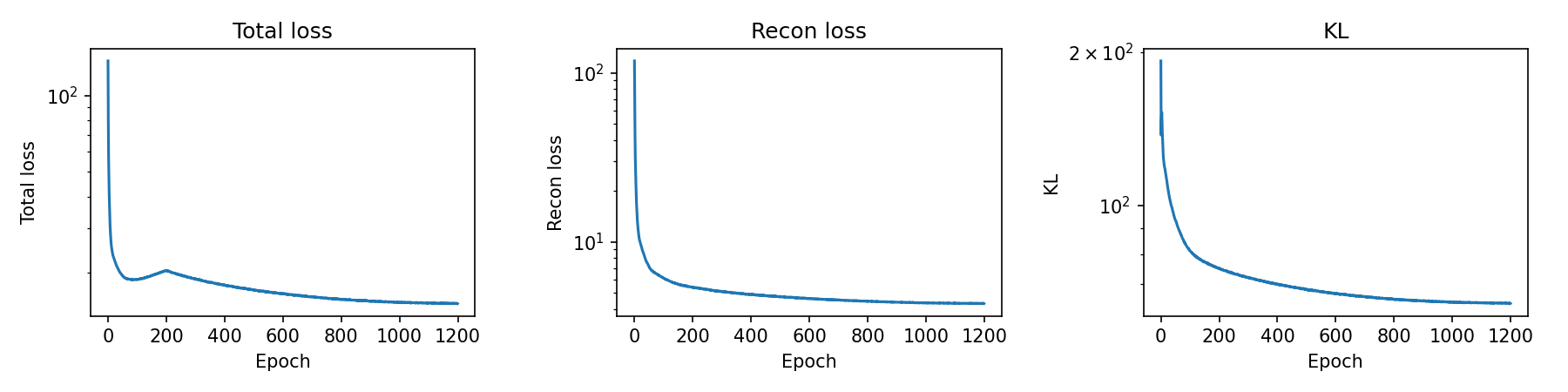}
  \caption{Conductivity}
\end{subfigure}\hfill
\begin{subfigure}[t]{0.48\textwidth}
  \centering
  \includegraphics[width=\textwidth]{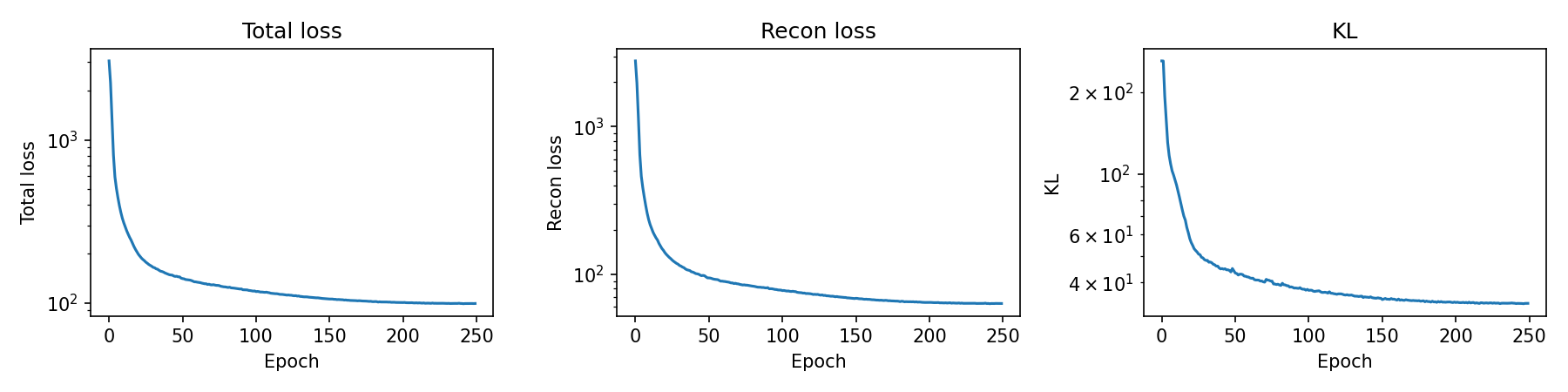}
  \caption{Source}
\end{subfigure}
\caption{AuxVAE training loss curves for both problems.}
\label{fig:training_history}
\end{figure}

\paragraph{HMC configuration.}
All methods use HMC with dual-averaging step-size adaptation during a
$2{,}000$-step warmup, followed by $3{,}000$ post-warmup samples.
Two-phase diagonal mass-matrix adaptation is used: during the first half
of warmup, samples are collected under $M=I$ to estimate per-coordinate
variances, then $M$ is set to a diagonal matrix matching these variances
and the step-size adaptation is reset.
Trajectory lengths are tuned per problem:
$L=15$ for conductivity ($\varepsilon_0=0.15$) and
$L=150$ for source ($\varepsilon_0=0.01$), reflecting the sharper posterior
geometry induced by the low observation noise in the source problem.
GP baselines use smaller step sizes
($\varepsilon_0=0.03$ for GP-Fixed, $\varepsilon_0=0.01$ for GP-Hier
in conductivity; $\varepsilon_0=0.01$ and $0.005$ in source)
due to the higher-dimensional state space ($784$ field DOFs).

\paragraph{Effective sample size.}
We report the mean effective sample size (ESS) computed via FFT-based
autocorrelation with truncation at the first non-positive autocorrelation
lag~\citep{geyer1992practical}.
ESS is computed over the \emph{full sampling space} for each method:
the complete latent chain $(\mathbf{z}_{\mathrm{aux}}, \mathbf{z}_{\mathrm{rec}})$
for the AuxVAE and Plain VAE ($d=52$ for conductivity, $d=20$ for source),
the whitened field $\boldsymbol{\xi}$ for GP-Fixed ($d=784$), and
$(\boldsymbol{\xi}, \boldsymbol{\theta})$ for GP-Hier ($d=788$).

\subsection{Results}
\label{sec:results_2d}

\subsubsection{Conductivity Identification (Well-Specified Prior)}

Table~\ref{tab:cond_results} summarises the HMC results.
All four methods achieve low field RMSE ($0.34$--$0.36$) and
well-calibrated $95\%$ credible intervals (coverage $0.91$--$0.98$).
GP-Fixed performs best in field accuracy (RMSE~$=0.340$, coverage~$=0.978$),
as expected for an oracle baseline with known hyperparameters.
AuxVAE achieves comparable accuracy (RMSE~$=0.358$) with excellent
calibration (coverage~$=0.969$) and recovers all four GP hyperparameters
within their $95\%$ credible intervals.
GP-Hier also recovers all parameters but suffers from very low acceptance
($0.037$) and low ESS~($70$), reflecting the difficulty of jointly sampling
hyperparameters and the $784$-dimensional whitened field with HMC.
The AuxVAE achieves the highest mean ESS ($1{,}046$) among the methods
with learnable hyperparameters, with a $7\times$ improvement over Plain VAE
($150$) and a $15\times$ improvement over GP-Hier ($70$).

\begin{table}[htbp]
\centering
\caption{Conductivity identification results (HMC, $L=20$, $n=3{,}000$ samples, warmup $=500$).}
\label{tab:cond_results}
\begin{tabular}{lccccc}
\toprule
Method & Field RMSE & 95\% Cov. & Accept & ESS (mean) & Params in CI \\
\midrule
AuxVAE (ours)   & 0.358 & 0.969 & \textbf{0.869} & \textbf{1046} & 4/4 \\
Plain VAE       & 0.350 & 0.955 & 0.114          & 150           & --- \\
GP-Fixed        & \textbf{0.340} & \textbf{0.978} & 0.581 & 3000$^\dagger$ & --- (oracle) \\
GP-Hier         & 0.358 & 0.911 & 0.037          & 70            & 4/4 \\
\bottomrule
\multicolumn{6}{l}{\footnotesize $^\dagger$\,GP-Fixed operates in a pre-whitened space with oracle hyperparameters,}\\
\multicolumn{6}{l}{\footnotesize yielding a near-Gaussian posterior and essentially uncorrelated samples.}
\end{tabular}
\end{table}

\paragraph{Posterior parameter recovery.}
The AuxVAE posterior means and standard deviations for the GP
hyperparameters are:
$\mu = 0.27 \pm 0.22$ (true $0.30$),
$\sigma = 0.45 \pm 0.14$ (true $0.40$),
$\ell_x = 0.20 \pm 0.09$ (true $0.20$),
$\ell_y = 0.20 \pm 0.09$ (true $0.15$).
All true values fall well within the posterior credible intervals
(Figure~\ref{fig:cond_marginals}).

\begin{figure}[htbp]
\centering
\begin{subfigure}[t]{0.48\textwidth}
  \centering
  \includegraphics[width=\textwidth]{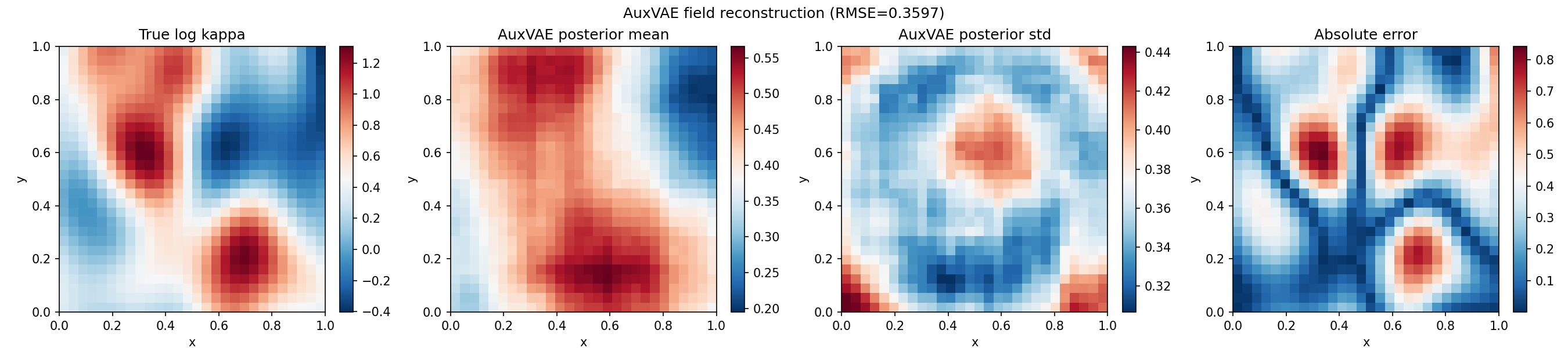}
  \caption{AuxVAE: temperature field}
\end{subfigure}\hfill
\begin{subfigure}[t]{0.48\textwidth}
  \centering
  \includegraphics[width=\textwidth]{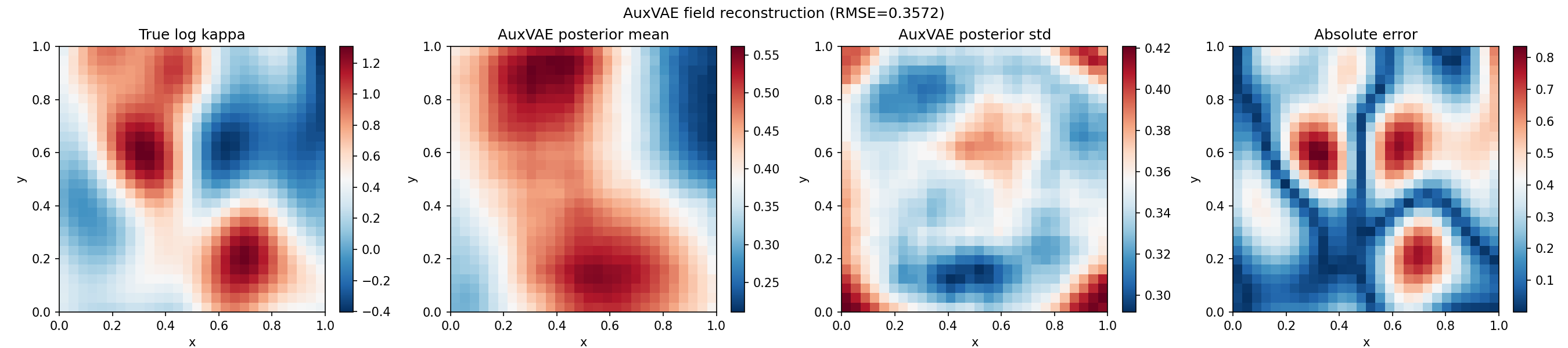}
  \caption{AuxVAE: conductivity field}
\end{subfigure}\\[6pt]
\begin{subfigure}[t]{0.48\textwidth}
  \centering
  \includegraphics[width=\textwidth]{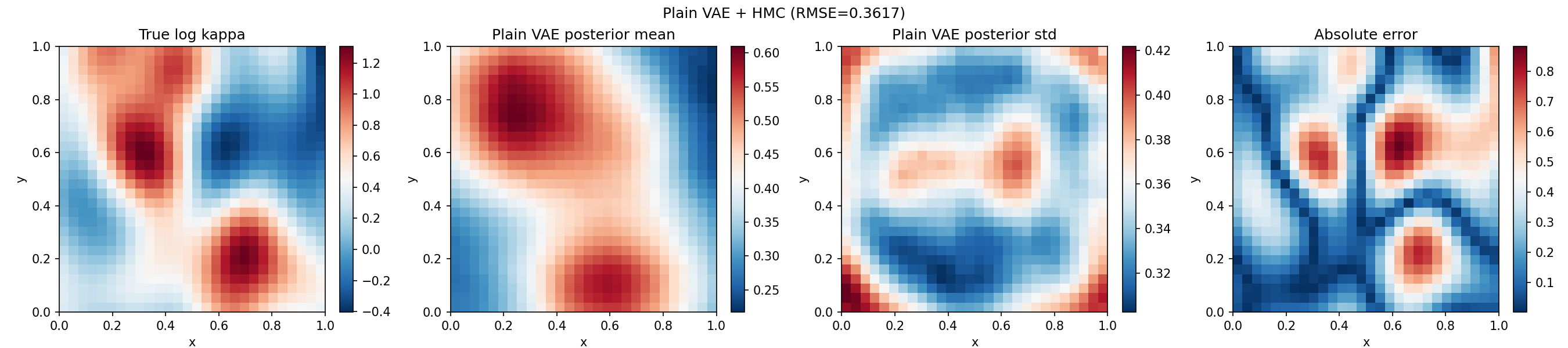}
  \caption{Plain VAE: temperature field}
\end{subfigure}\hfill
\begin{subfigure}[t]{0.48\textwidth}
  \centering
  \includegraphics[width=\textwidth]{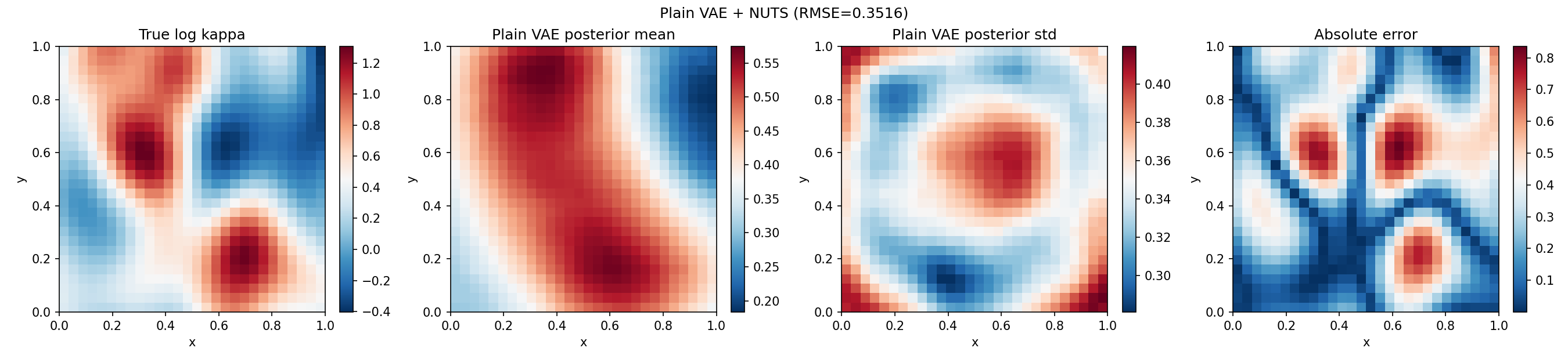}
  \caption{Plain VAE: conductivity field}
\end{subfigure}\\[6pt]
\begin{subfigure}[t]{0.48\textwidth}
  \centering
  \includegraphics[width=\textwidth]{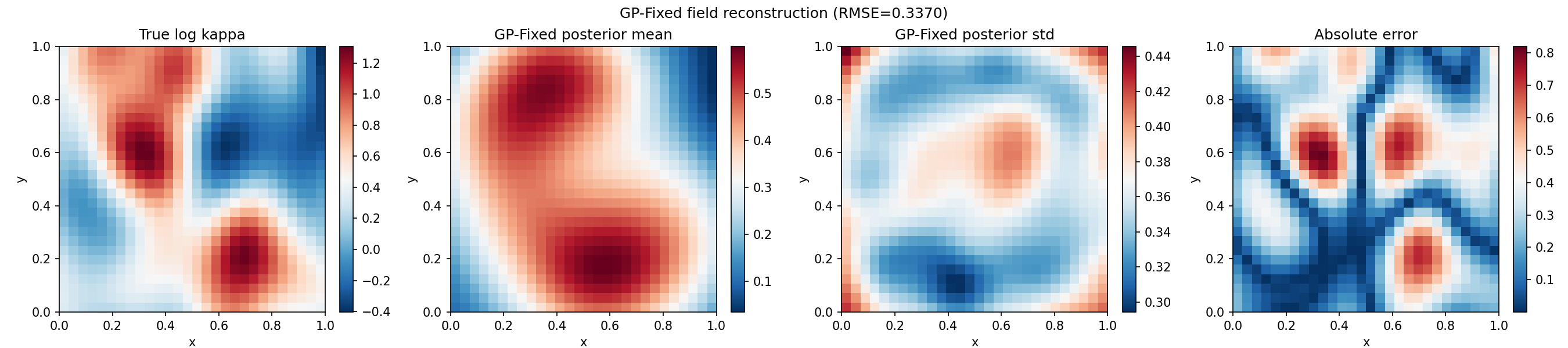}
  \caption{GP-Fixed: temperature field}
\end{subfigure}\hfill
\begin{subfigure}[t]{0.48\textwidth}
  \centering
  \includegraphics[width=\textwidth]{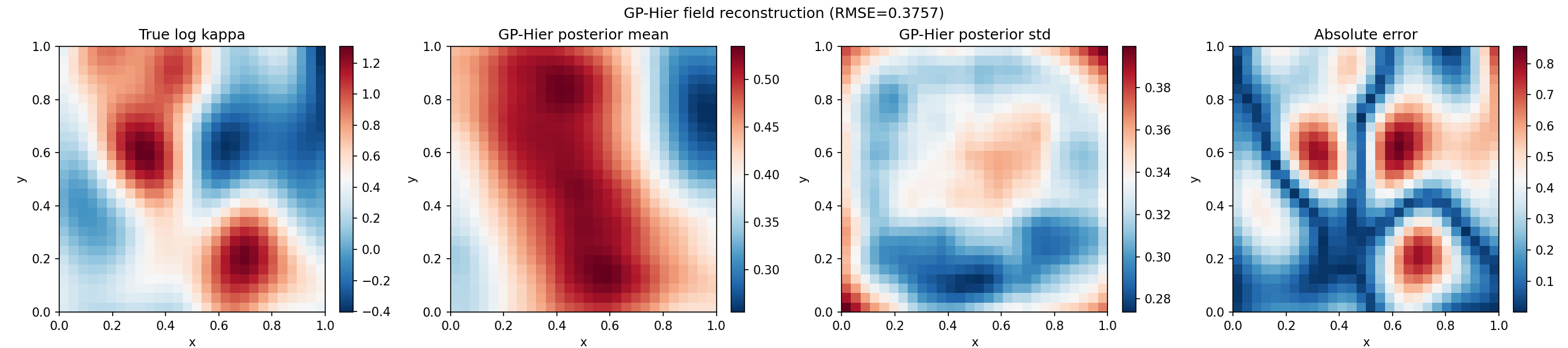}
  \caption{GP-Hier: temperature field}
\end{subfigure}
\caption{Conductivity identification: posterior mean field reconstructions
  for all four methods. True field, posterior mean, and pointwise standard
  deviation are shown.}
\label{fig:cond_fields}
\end{figure}

\begin{figure}[htbp]
\centering
\includegraphics[width=0.7\textwidth]{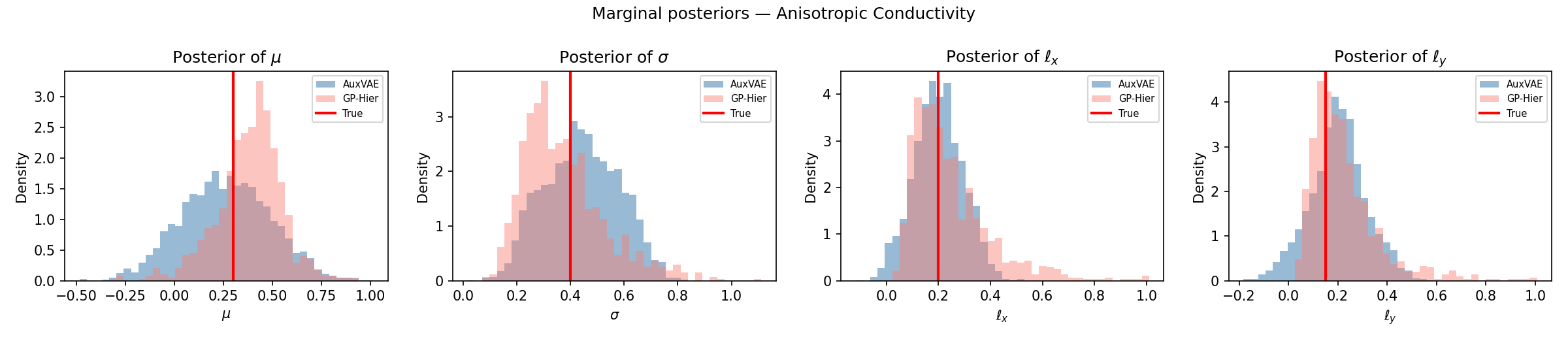}
\caption{Conductivity identification: marginal posterior distributions of the
  GP hyperparameters $(\mu,\sigma,\ell_x,\ell_y)$ from the AuxVAE.
  Vertical dashed lines indicate true values.}
\label{fig:cond_marginals}
\end{figure}

\begin{figure}[htbp]
\centering
\includegraphics[width=0.7\textwidth]{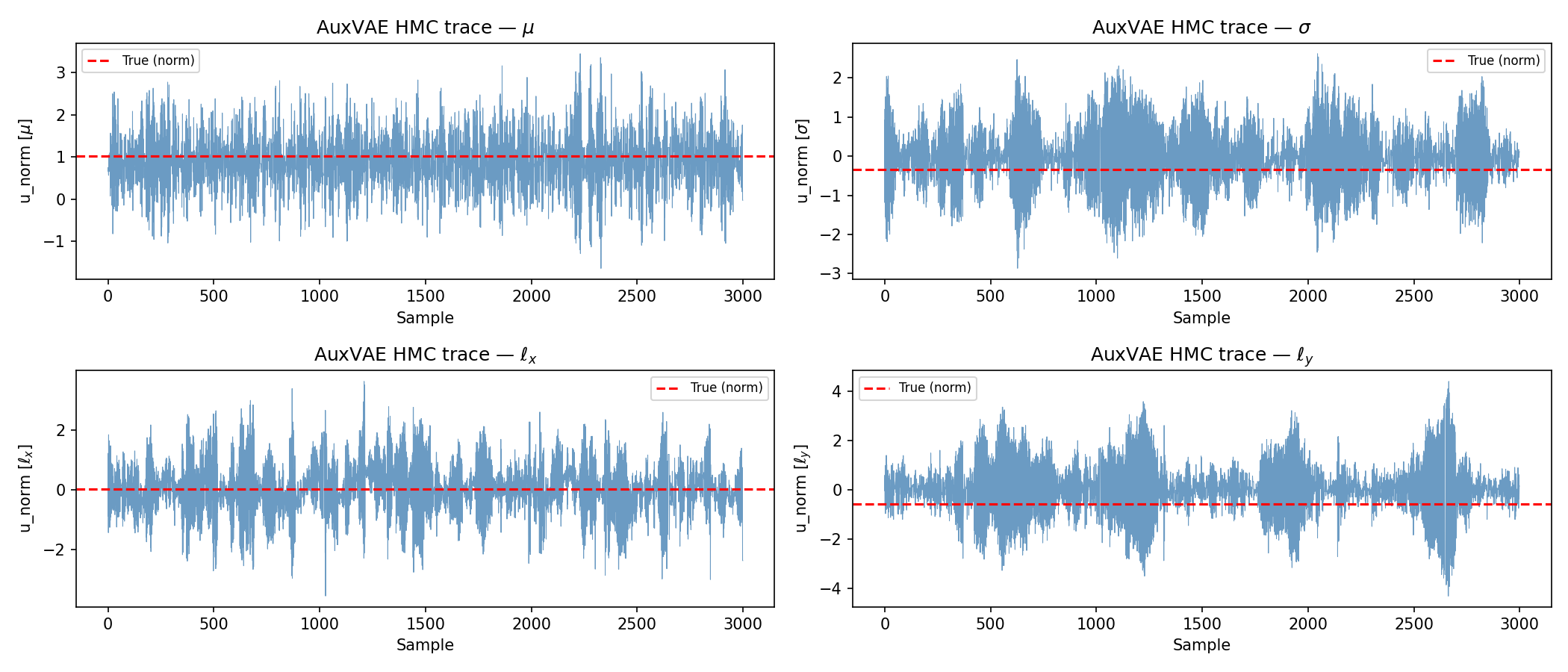}
\caption{Conductivity identification: HMC trace plots for the AuxVAE
  auxiliary latent dimensions.}
\label{fig:cond_traces}
\end{figure}

\subsubsection{Temperature Field Reconstruction}

Figure~\ref{fig:y_true_vs_y_pred_T} compares the true and predicted
temperature fields for both problems.
In the conductivity problem (sparse observations, $5\%$ coverage,
$\sigma_{\mathrm{obs}}=0.50$), the AuxVAE achieves $R^2 = 0.977$ and
$\mathrm{Corr} = 0.989$, with the scatter around the diagonal reflecting
the moderate observation noise and limited sensor coverage.
In the source problem (dense observations, $30\%$ coverage,
$\sigma_{\mathrm{obs}}=0.01$), the reconstruction is essentially
perfect ($R^2 = 0.9999$, $\mathrm{Corr} = 1.000$).
These results confirm that the AuxVAE prior enables accurate recovery of
the observable temperature field in both regimes.

\begin{figure}[htbp]
\centering
\includegraphics[width=\textwidth]{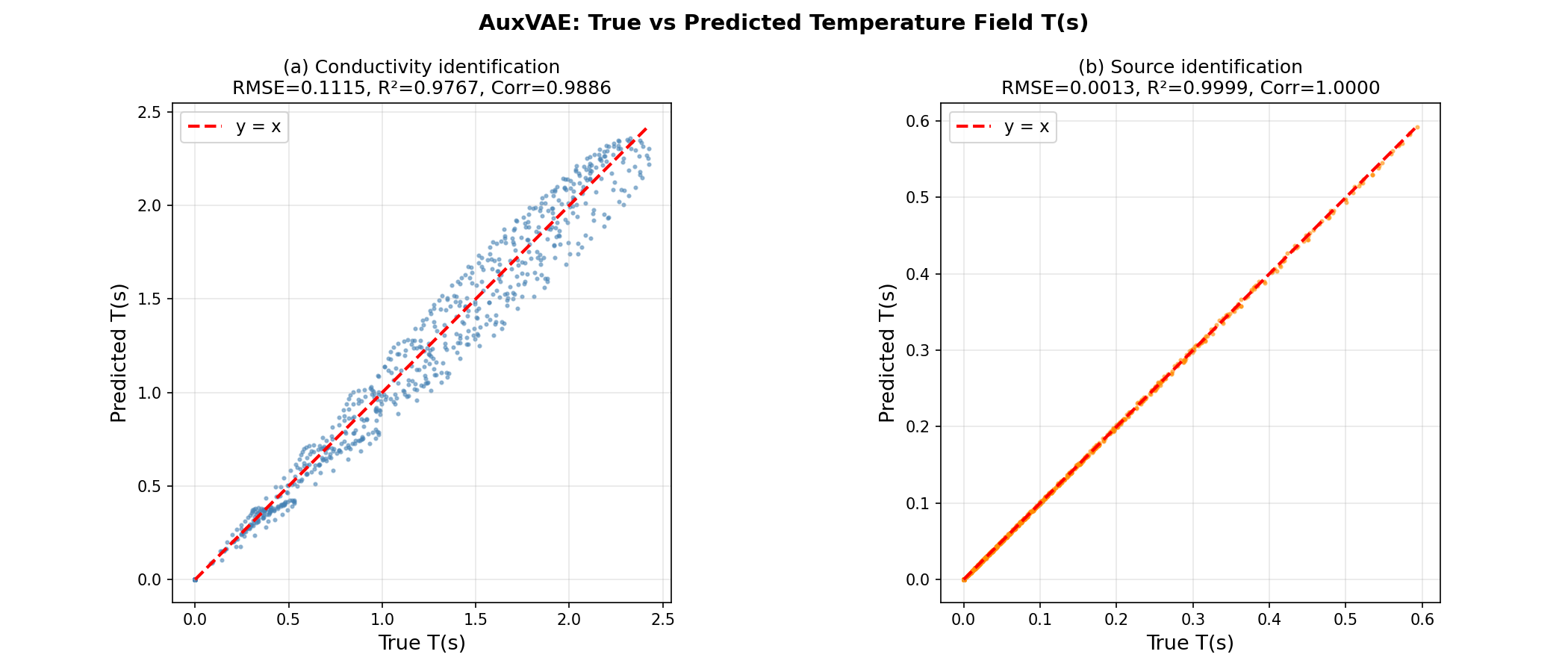}
\caption{True vs.\ predicted temperature field $T(\mathbf{s})$ from the
  AuxVAE posterior mean for (a)~conductivity identification and
  (b)~source identification. The dashed red line indicates perfect
  reconstruction ($y = x$).}
\label{fig:y_true_vs_y_pred_T}
\end{figure}

\subsubsection{Source Identification (Misspecified Prior)}

Table~\ref{tab:source_results} presents the source identification results.
Here the GP baselines face a fundamental model mismatch: the true source
field is a localised Gaussian bump, not a realisation from a GP.

The AuxVAE achieves the best RMSE ($1.455$) and near-nominal coverage
($0.944$), recovering all four source parameters within their credible
intervals.
The Plain VAE obtains higher RMSE ($1.729$) and substantially lower coverage ($0.677$),
highlighting the value of the structured auxiliary latent space for
calibration.
GP-Fixed, despite oracle hyperparameters, yields RMSE~$=4.484$ and
coverage of only $0.241$ --- the GP kernel cannot represent the localised
bump structure.
GP-Hier effectively fails, with coverage~$\approx 21\%$, ESS~$\approx 6$, and
$0/4$ parameters recovered.

\begin{table}[htbp]
\centering
\caption{Source identification results (HMC, $L=100$, $n=3{,}000$ samples, warmup $=500$).}
\label{tab:source_results}
\begin{tabular}{lccccc}
\toprule
Method & Field RMSE & 95\% Cov. & Accept & ESS (mean) & Params in CI \\
\midrule
AuxVAE (ours)   & \textbf{1.455} & \textbf{0.944} & \textbf{0.813} & 66   & 4/4 \\
Plain VAE       & 1.729          & 0.677          & 0.316          & 104  & --- \\
GP-Fixed        & 4.484          & 0.241          & 0.927          & 111  & --- (oracle) \\
GP-Hier         & 3.405          & 0.213          & 0.430          & 6    & 0/4 \\
\bottomrule
\end{tabular}
\end{table}

\paragraph{Posterior parameter recovery.}
The AuxVAE recovers the source parameters
(Figure~\ref{fig:source_marginals}):
$A = 61.5 \pm 17.7$ (true $50.0$),
$c_x = 0.438 \pm 0.069$ (true $0.45$),
$c_y = 0.598 \pm 0.088$ (true $0.55$),
$w = 0.082 \pm 0.008$ (true $0.080$).
The width $w$ and centres $(c_x, c_y)$ are tightly constrained, while the
amplitude $A$ has wider uncertainty, consistent with the PDE's smoothing
effect on the temperature field.

\begin{figure}[htbp]
\centering
\begin{subfigure}[t]{0.48\textwidth}
  \centering
  \includegraphics[width=\textwidth]{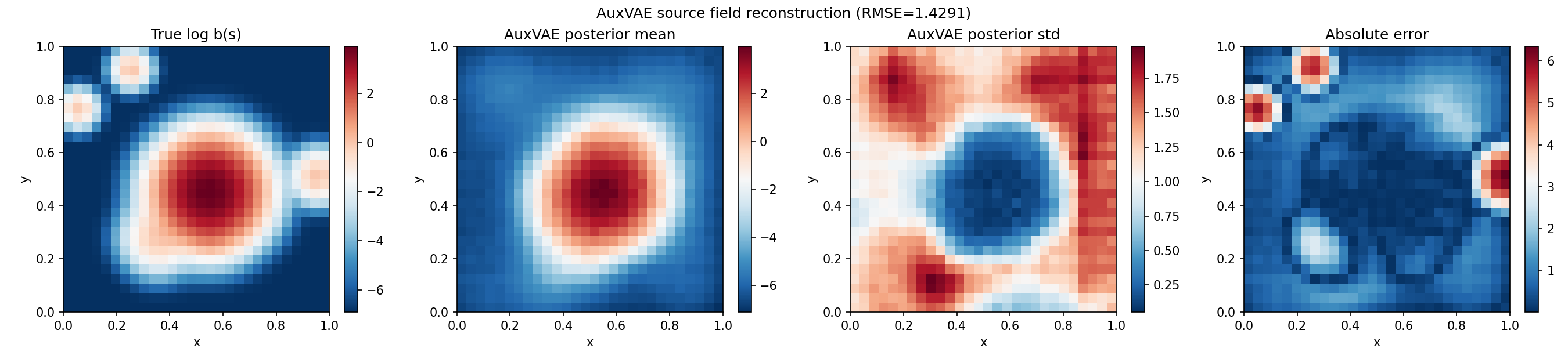}
  \caption{AuxVAE}
\end{subfigure}\hfill
\begin{subfigure}[t]{0.48\textwidth}
  \centering
  \includegraphics[width=\textwidth]{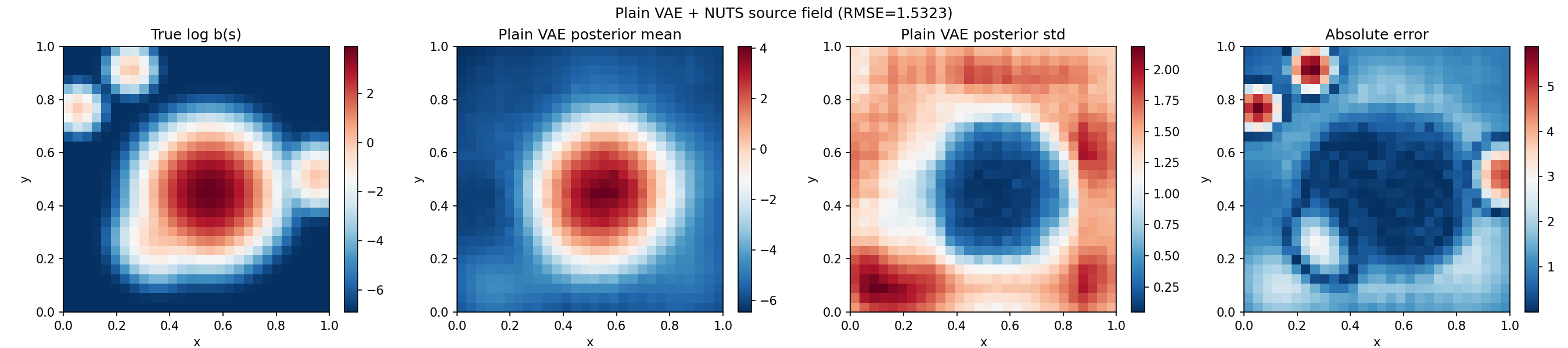}
  \caption{Plain VAE}
\end{subfigure}\\[6pt]
\begin{subfigure}[t]{0.48\textwidth}
  \centering
  \includegraphics[width=\textwidth]{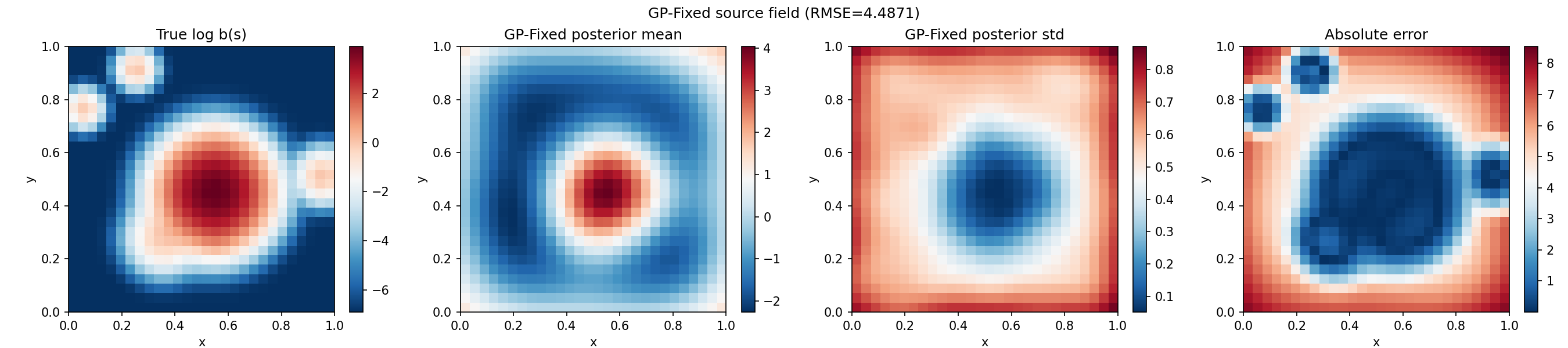}
  \caption{GP-Fixed}
\end{subfigure}\hfill
\begin{subfigure}[t]{0.48\textwidth}
  \centering
  \includegraphics[width=\textwidth]{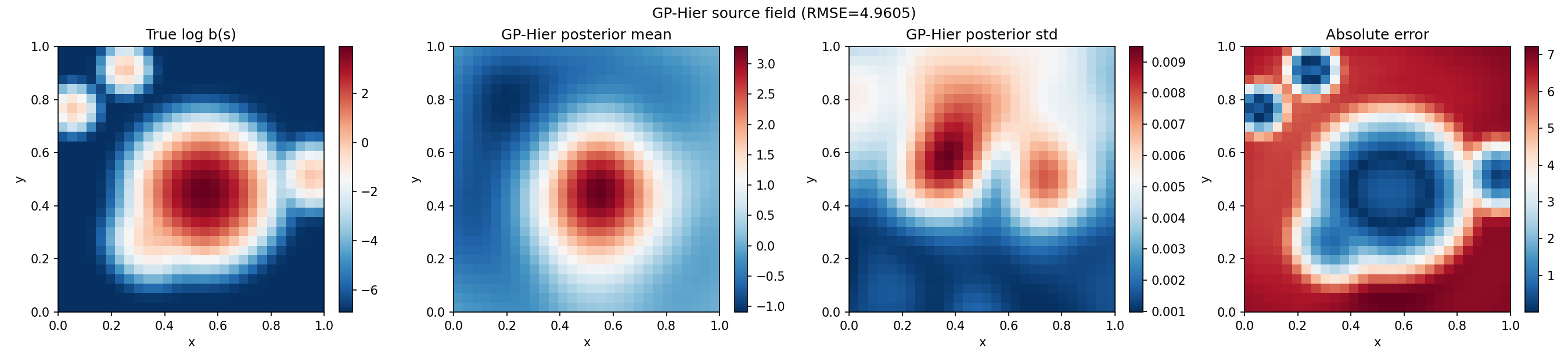}
  \caption{GP-Hier}
\end{subfigure}
\caption{Source identification: posterior mean field reconstructions for all
  four methods. The localised Gaussian bump structure is well captured by the
  AuxVAE but poorly represented by the GP baselines.}
\label{fig:source_fields}
\end{figure}

\paragraph{Spatially aware uncertainty.}
While none of the four methods recover the small residual bumps in the
posterior mean (Figure~\ref{fig:source_fields}), the AuxVAE is the only
method whose posterior standard deviation is elevated precisely at the
locations of these residual features.
This indicates that the AuxVAE prior has learned enough about the field
distribution to recognise that localised deviations from the dominant
Gaussian structure are plausible, and it expresses this as spatially
concentrated uncertainty rather than a uniform error.
By contrast, the GP baselines show no spatially structured uncertainty
correlated with the residual bumps --- their posterior variance is either
globally diffuse (GP-Fixed) or collapsed (GP-Hier).
This spatially aware uncertainty is a hallmark of well-calibrated
Bayesian inference and is reflected in the AuxVAE's near-nominal
$95\%$ coverage ($0.944$), which requires the credible intervals to be
wider where the true field deviates from the posterior mean and tighter
where it does not.

\begin{figure}[htbp]
\centering
\includegraphics[width=0.7\textwidth]{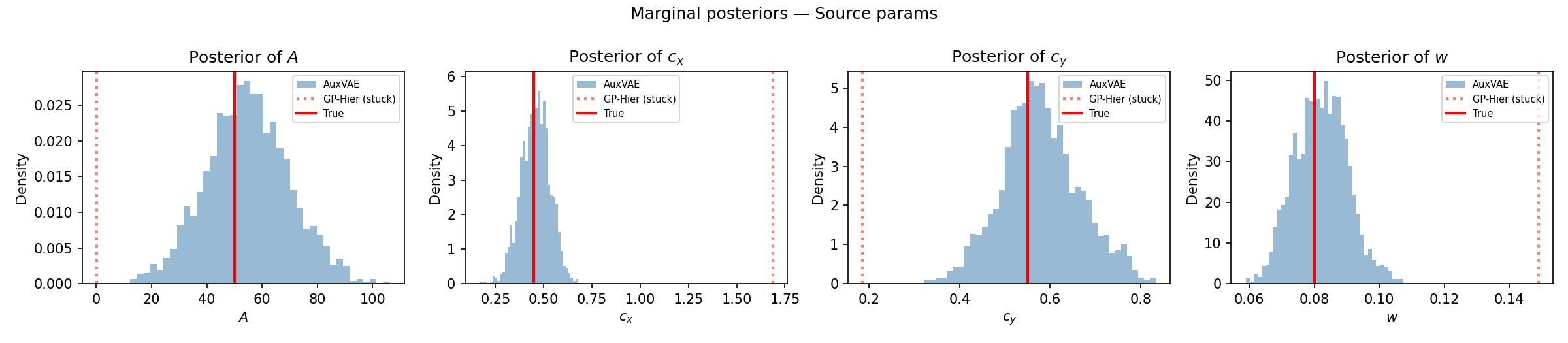}
\caption{Source identification: marginal posterior distributions of the source
  parameters $(A, c_x, c_y, w)$ from the AuxVAE. GP-Hier is shown as a
  vertical line (stuck chain, ESS$\,\approx 6$). Vertical dashed lines indicate
  true values.}
\label{fig:source_marginals}
\end{figure}

\begin{figure}[htbp]
\centering
\includegraphics[width=0.7\textwidth]{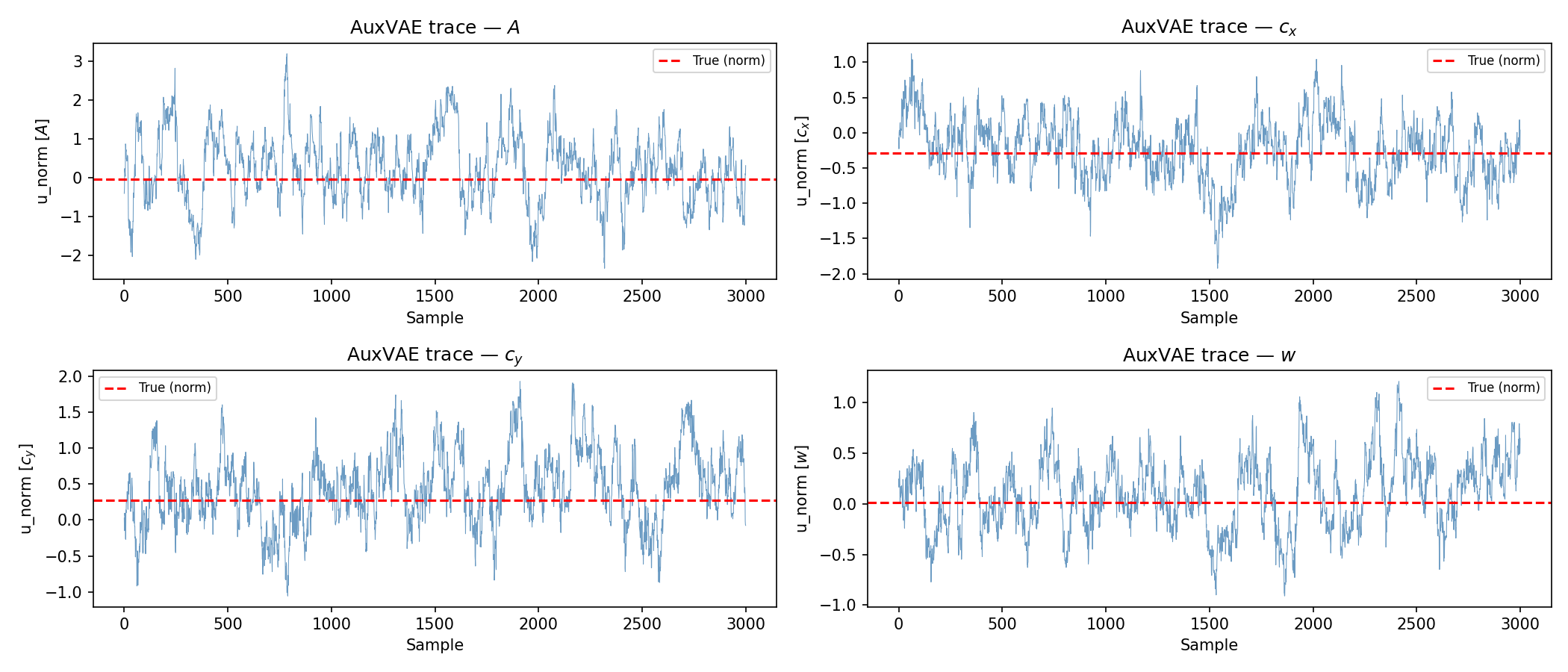}
\caption{Source identification: HMC trace plots for the AuxVAE auxiliary
  latent dimensions.}
\label{fig:source_traces}
\end{figure}

\subsubsection{Why Do the Two Problems Yield Such Different Results?}

Comparing Tables~\ref{tab:cond_results} and~\ref{tab:source_results} reveals
striking differences: in conductivity all methods achieve RMSE~$<0.36$ and
coverage~$>0.91$, while in source identification the GP baselines collapse to
RMSE~$>3.4$ and coverage~$<25\%$.
Three factors explain this contrast.

\paragraph{Prior specification.}
The conductivity problem is \emph{well-specified} for the GP baselines:
the true field is literally a draw from the same anisotropic SE kernel
used by GP-Fixed and GP-Hier.
In this setting, the GP prior provides the correct inductive bias, and even
the oracle GP-Fixed baseline has access to the exact generative
hyperparameters.
By contrast, the source problem is \emph{misspecified}: the true field is a
localised Gaussian bump with compact support, a structure that no
stationary GP kernel can efficiently represent.
A GP prior spreads probability mass over globally smooth realisations and
cannot concentrate on the peaked, localised fields characteristic of the
source problem.
This mismatch is irreparable regardless of hyperparameter tuning ---
GP-Fixed with oracle hyperparameters still achieves only $24\%$ coverage.

\paragraph{Role of the learned prior.}
The AuxVAE sidesteps the specification issue entirely by learning the
field distribution from training data.
In conductivity, the AuxVAE implicitly learns the GP structure and matches
the oracle baseline.
In source identification, the AuxVAE learns the Gaussian bump manifold
parameterised by $(A, c_x, c_y, w)$, enabling it to represent the
localised structure that the GP cannot.
This is the central advantage of a data-driven prior: it adapts to the
problem structure without requiring the practitioner to specify the correct
parametric family.

\paragraph{Observation regime.}
The two problems also differ in their observation configurations.
The conductivity problem uses sparse observations ($5\%$ coverage) with
moderate noise ($\sigma_{\mathrm{obs}}=0.50$), creating a broad posterior
that is relatively easy to sample.
The source problem uses denser observations ($30\%$ coverage) with low noise
($\sigma_{\mathrm{obs}}=0.01$), creating a sharp, tightly concentrated
posterior.
This sharp posterior forces the HMC step size to small values
($\varepsilon \approx 0.005$--$0.01$) and requires longer trajectories
($L=100$ vs $L=20$ for conductivity) to achieve adequate mixing, resulting
in lower ESS across all methods.

\paragraph{Dimensionality and posterior geometry.}
The GP baselines sample in the full $784$-dimensional field space (plus
$4$ hyperparameters for GP-Hier), while the AuxVAE operates in a
$20$--$52$-dimensional latent space.
This dimensionality reduction, combined with the disentangled structure of
the auxiliary variables, yields a better-conditioned posterior that HMC can
traverse efficiently.
GP-Hier faces the additional challenge of a \emph{funnel geometry} where
the hyperparameters and the field are strongly coupled: small changes in
length scales $(\ell_x, \ell_y)$ reshape the entire kernel matrix, creating
sharp ridges in the joint posterior that are difficult for any gradient-based
sampler to navigate.

\subsubsection{Sampling Efficiency: HMC vs NUTS}

Table~\ref{tab:hmc_vs_nuts} compares mean ESS for HMC and NUTS across all
methods.

\begin{table}[htbp]
\centering
\caption{Mean ESS comparison: HMC vs NUTS ($n=3{,}000$ samples, warmup $=500$).}
\label{tab:hmc_vs_nuts}
\begin{tabular}{llcc|cc}
\toprule
& & \multicolumn{2}{c|}{Conductivity} & \multicolumn{2}{c}{Source} \\
Method & & HMC & NUTS & HMC & NUTS \\
\midrule
AuxVAE (ours) & & \textbf{1046} & 2144 & 66  & 104 \\
Plain VAE     & & 150           & \textbf{2519} & 104 & \textbf{242} \\
GP-Fixed      & & 3000$^\dagger$ & 2617 & 111 & 295 \\
GP-Hier       & & 70            & \textbf{1797} & 6   & 6 \\
\bottomrule
\multicolumn{6}{l}{\footnotesize $^\dagger$\,Near-Gaussian posterior in the pre-whitened oracle space.}
\end{tabular}
\end{table}

For HMC with a fixed trajectory length, the AuxVAE achieves the highest ESS
among the methods with learnable or inferred hyperparameters: $1{,}046$ in
conductivity ($7\times$ Plain VAE, $15\times$ GP-Hier) and $66$ in source
($11\times$ GP-Hier).
This reflects the advantage of the low-dimensional, disentangled latent space:
the AuxVAE posterior is well-conditioned enough that a fixed trajectory length
suffices for efficient exploration.

NUTS generally improves ESS for all methods by adapting the trajectory length
at each iteration.
However, the relative ordering is preserved: GP-Hier remains stuck in the
source problem (ESS~$\approx 6$ for both HMC and NUTS), while the AuxVAE
maintains reasonable mixing (ESS~$=104$ with NUTS).
In the conductivity problem, NUTS substantially improves ESS for Plain VAE
and GP-Hier ($2{,}519$ and $1{,}797$, respectively), suggesting that these
methods benefit from adaptive trajectory lengths due to their more complex
posterior geometries.

The ESS is notably lower for the source problem across all methods.
This is not a sampler deficiency but reflects the \emph{posterior geometry}:
the low observation noise ($\sigma_{\mathrm{obs}}=0.01$) creates a sharp,
concentrated likelihood that constrains the step size to small values
($\varepsilon \approx 0.005$), requiring longer trajectories
($L=100$ vs $L=20$) to achieve adequate mixing.

We close this section with four remarks summarising the key observations.
\emph{(i)~No penalty under correct specification:}
In the conductivity problem the AuxVAE matches the oracle GP-Fixed
baseline (RMSE~$0.358$ vs $0.340$), showing that the
learned prior incurs no accuracy penalty when the GP model is correct,
while additionally recovering the physical hyperparameters as a byproduct
of inference.
\emph{(ii)~Robustness under misspecification:}
In the source problem the AuxVAE yields a $3\times$ RMSE reduction
($1.46$ vs $4.48$) and near-nominal coverage ($0.944$ vs $0.241$)
relative to GP-Fixed, demonstrating that a data-driven prior remains
effective when the parametric family is misspecified.
\emph{(iii)~Interpretable, spatially aware uncertainty:}
The auxiliary space provides physically meaningful posterior summaries
(4/4 parameters recovered in both problems), and the AuxVAE is the only
method whose pointwise posterior standard deviation is elevated at the
locations of residual features, indicating spatially calibrated uncertainty.
\emph{(iv)~Sampling efficiency:}
The disentangled latent space yields well-conditioned posteriors with HMC:
mean ESS~$=1{,}046$ (conductivity) and $66$ (source),
far exceeding GP-Hier ($70$ and $6$, respectively).
The AuxVAE maintains good acceptance rates ($0.869$ and $0.813$) while
Plain VAE ($0.114$) and GP-Hier ($0.037$) struggle with HMC in conductivity,
confirming that the structured latent space facilitates gradient-based sampling.

\section{Discussion}
\label{sec:discussion}

\subsection{Limitations}
This study has several limitations.

The generator $G_\theta$ restricts $x$ to (approximately) the range of the learned model. If the offline dataset $\mathcal{D}$ does not cover plausible structures required by the true inverse solution, posterior inference can be biased even when the likelihood favors those structures. This is a form of prior misspecification that may be harder to diagnose than in classical parametric priors. Possible mitigations include explicit model-error terms such as $x = G_\theta(\cdot)+\delta$, hybrid priors that blend classical components with the learned model, and out-of-distribution diagnostics.

The method does not discover the correct physical factors; rather, it enforces that selected latent coordinates align with the user-provided $u$. If $u$ omits important drivers, those drivers must be absorbed by $z_{\mathrm{rec}}$, which reduces the interpretability of posterior summaries. Conversely, if $u$ contains nuisance or weakly relevant quantities, the tethering in \eqref{eq:aux-prior} can burden the latent representation. Careful selection of auxiliary variables is therefore essential. When the adequacy of a proposed auxiliary variable is uncertain, one may avoid hard alignment and instead use independence-promoting regularization on $z_{\mathrm{rec}}$. Examples include total-correlation penalties (Factor-VAE \cite{factor_VAE}, TC-VAE \cite{beta-TCVAE}), which encourage factorized latent distributions by penalizing deviations from independence. Such regularization does not ensure semantic disentanglement, but it can reduce statistical coupling among residual coordinates.

In PDE-based inverse problems, each evaluation of the likelihood in \eqref{eq:latent-posterior} requires computing $\mathcal{F}(G_\theta(\cdot))$, which may involve a numerical solve. Gradient-based methods (MAP, HMC, variational inference) additionally require differentiating through $\mathcal{F}$ and $G_\theta$, which may necessitate adjoint methods or differentiable solvers. If forward solves are expensive, the benefit of latent-space dimension reduction may be offset by forward-model cost. This motivates the use of surrogates, reduced-order models, or more amortized inference schemes.

\subsection{Concluding Remarks}

We have described how auxiliary-variable-guided disentanglement in deep generative models can be reinterpreted as the specification of a hierarchical prior for Bayesian inverse problems. The resulting disentangled deep priors combine the expressive power of deep generative models with interpretable latent coordinates aligned with physically meaningful quantities.

Several directions for further work merit investigation. These include embedding these priors into large-scale PDE-based Bayesian inverse problems, including subsurface and geothermal settings where coefficient fields have complex geological structure. Exploring optimal sensor placement and experimental design using the latent-space decomposition to target reductions in uncertainty for specific physical factors is another natural extension. Integrating disentangled priors with real-time inference and control systems, e.g., in fusion experiments where fast posterior summaries over profile parameters are desirable, represents a promising application domain.

More broadly, disentangled deep priors suggest a path toward Bayesian inference frameworks in which latent variables are not merely computational devices but interpretable carriers of uncertainty. Such structure can make posterior analyses more transparent and facilitate the communication of uncertainty to domain experts and decision-makers in scientific applications.

\section*{Acknowledgments}
This work was supported by the Office of Science, U.S. Department of Energy, Office of Science, Office of Advanced Scientific Computing Research (ASCR) and the Scientific Discovery through Advanced Computing (SciDAC) FASTMath Institute program,  the SciDAC Nuclear Physics partnership titled ``Femtoscale Imaging of Nuclei using Exascale Platforms,'' and Competitive Portfolios Project on ``Energy Efficient Computing: A Holistic Methodology'' under Contract No. DE-AC02-06CH11357.

\bibliographystyle{plainnat}
\bibliography{references} 

 \begin{center}
	\scriptsize \framebox{\parbox{4in}{Government License (will be removed at publication):
			The submitted manuscript has been created by UChicago Argonne, LLC,
			Operator of Argonne National Laboratory (``Argonne").  Argonne, a
			U.S. Department of Energy Office of Science laboratory, is operated
			under Contract No. DE-AC02-06CH11357.  The U.S. Government retains for
			itself, and others acting on its behalf, a paid-up nonexclusive,
			irrevocable worldwide license in said article to reproduce, prepare
			derivative works, distribute copies to the public, and perform
			publicly and display publicly, by or on behalf of the Government. The Department of Energy will provide public access to these results of federally sponsored research in accordance with the DOE Public Access Plan. http://energy.gov/downloads/doe-public-access-plan.
}}
	\normalsize
\end{center}

\clearpage
\newpage

\appendix

\end{document}